\numberwithin{equation}{section}
\def\A{{\mathcal A}}
\def\B{{\mathcal B}}
\def\C{{\mathcal C}}
\def\D{{\mathcal D}}
\def\E{{\mathcal E}}
\def\G{{\mathcal G}}
\def\H{{\mathcal H}}
\def\S{{\mathcal S}}
\def\U{{\mathcal U}}
\def\V{{\mathcal V}}
\def\W{{\mathcal W}}
\def\NN{{\mathbb N}}
\def\FF{{\mathbb F}}
\newtheorem{theo}{Theorem}
\newtheorem{prop}{Proposition}
\newtheorem{lemm}{Lemma}
\newtheorem{coro}{Corollary}
\newcommand{\bey}[2]{\left[ {#1\atop #2} \right]}
  \date{}
\begin{document}

\title{On  generic erasure correcting sets and related problems}
\date{}
\author{ R. Ahlswede and H. Aydinian*
\thanks{This author was supported by the DFG Project AH46/8-1} \\
Department of Mathematics\\
University of Bielefeld\\
POB 100131,\\
D-33501 Bielefeld, Germany\\
ahlswede@math.uni-bielefeld.de\\
ayd@math.uni-bielefeld.de}
\maketitle

\vskip 1truecm

\begin{abstract}
Motivated by iterative decoding techniques for the binary erasure channel Hollmann and Tolhuizen
introduced and studied the notion of generic erasure correcting sets for  linear codes.
A generic $(r,s)$--erasure correcting set generates for all codes of codimension ~$r$~ a parity 
check matrix that allows iterative decoding
of all correctable erasure patterns of size $s$ or less. 
The problem is to derive
 bounds on the minimum size $F(r,s)$ of generic erasure correcting sets
and to find  constructions for such sets.
In this paper we continue the study of these sets.
We derive better lower and upper bounds.
Hollmann and Tolhuizen also introduced the stronger notion of $(r,s)$--sets
and derived bounds for their minimum size $G(r,s)$. Here also we improve these  bounds.
We  observe that these two conceps are closely related  
  to so called $s$--wise intersecting codes, an area, in which $G(r,s)$ has been  studied
primarily with respect to ratewise performance. We derive connections.
Finally, we observed that hypergraph covering can be used for
both problems to derive good upper bounds. 
\end{abstract}
\begin{keywords}
 Iterative decoding, stopping redundancy, generic erasure correcting set, intersecting code
\end{keywords}

\section{Introduction }

Iterative decoding techniques, especially when applied to low-density
parity-check  codes, have recently attracted a lot of attention.
It is known that the performance of
iterative decoding algorithms in case of a binary erasure channel 
depends on the sizes of the {\it stopping sets} associated with a collection of parity check equations
 of the code  \cite{DPTRU}.  
Let $H$ be a parity--check matrix of a code $\C$, defined as 
 a matrix whose rows span  the dual code $\C^\bot$.
 A stopping set 
 is a nonempty set of code coordinates such that the  submatrix formed by the
corresponding columns of $H$ does not contain
a row of weight one.  
Given a parity-check matrix $H$,
the size of the smallest nonempty stopping set, denoted by $s(H)$, is called the
{\it stopping distance} \cite{SV} of the code with respect to $H$.
Iterative decoding techniques, given a parity check 
matrix $H$, allow to correct all 
erasure patterns of size $s(H)-1$ or less. Therefore, for better
performance of iterative erasure decoding it is desired that $s(H)$ be as large as possible.
  Since the support of any codeword (the set of its nonzero coordinates)
is a stopping set, we have $s(H)\leq d(\C)$ for
all choices of $H$. It is well known  that the equality   
 can always be achieved, by choosing sufficiently many
vectors from the dual code $\C^\bot$ as rows in $H$. 
This motivated Schwartz and Vardy \cite{SV} to introduce the notion of
 {\it stopping redundancy} of a code.
The stopping redundancy
of $\C$, denoted by  $\rho(\C)$, is the minimum number of rows
in a parity-check matrix such that  $s(\C)=d(\C)$.\\
Schwartz and Vardy
\cite{SV} derived general upper and lower bounds, as well as
more specific bounds for Reed--Muller codes, Golay codes, and MDS codes. 
Improvements upon general upper bounds are presented in \cite{HS}, \cite{HSV}. 
The stopping redundancy of Reed--Muller codes was further studied by Etzion \cite{E}.
Hehn et al. \cite{HM} studied the stopping redundancy of cyclic codes.\\
Recall that a binary linear code $\C$ is capable of correcting those and only those  erasure patterns
that do not contain the support of a non-zero codeword. These patterns are
called {\it correctable} for $\C$. All other erasure patterns are called {\it uncorrectable}.
Note that  the size of  a correctable erasure pattern for a code can 
be greater than its minimum distance and it is upper bounded by the codimension of the code.\\
Hollmann and Tolhuizen  \cite{HT2} observed  that given a linear code $\C$,  
any correctable erasure pattern
can be iteratively decoded provided a chosen parity check matrix contains sufficiently 
many rows.
This motivated them \cite{HT2} to introduce the notion of {\it generic erasure correcting
 sets} for binary linear codes.
A generic $(r,s)$--erasure correcting set, {\it generic $(r,s)$--set} for short, 
 generates for all codes of codimension 
$r$ a parity check matrix that allows iterative decoding
of all correctable erasure patterns of size $s$ or less. 
More formally, a subset  $\A$  of a binary vector space $\FF^r_2$
 is called generic $(r,s)$--set if for
any  binary linear code $\C$ of length $n$ and codimension $r$, and any parity check
$r\times n$ matrix $H$ of $\C$, the set of parity check equations $\H_\A=\{{\bf a}H:{\bf a}\in \A\}$
enables iterative decoding of all correctable erasure patterns of size $s$ or less.\\
Weber and Abdel--Ghaffar \cite{WA} constructed parity check matrices for the Hamming code  
that enable iterative decoding 
of all correctable erasure patterns of size at most three. 
Hollmann and Tolhuizen  \cite {HT}, \cite{HT2}  gave a general construction
and established upper and lower bounds for the minimum size  of  generic $(r,s)$--sets. \\
Throughout the paper we use the following notation.
We use  $[n,k,d]_q$ for a linear code $\C$ 
(of length $n$, dimension $k$, and  minimum Hamming distance $d$)
over  ${\FF}_q$. The Hamminng weight of a vector ${\bf a}$ is denoted by $ wt({\bf a})$.
We denote by $[n]$ the set of integers $\{1,\ldots,n\}$. 
 A $k$--element subset of a given set
is called  for short a $k$--subset.
$\FF_q^{k\times m}$ denotes the set of all $k\times m$ matrices over
 the finite field ${\FF}_q$.
For integers $0\leq k\leq m$, $\bey {m}{k}_q$ stands for
the $q$-ary Gaussian coefficient, defined by
$\bey{m}{0}_q=1$~ and
~$\bey{m}{k}_q=\prod^{k-1}_{i=0}\dfrac{(q^{m-i}-1)}{(q^{k-i}-1)}$~ for
   $k=1,\ldots,m$.
It is well known that $\bey{m}{k}_q$ is the number of $k$--dimensional subspaces in $\FF^m_q$. 
A $k$--dimensional subspace is called for short a $k$--subspace.
A coset of a $k$--subspace in $\FF^m_q$ is called a $k$--dimensional plane or shortly
$k$--plane. Recall that there are $q^{m-k}\bey{m}{k}_q$ ~$k$--planes in $\FF^m_q$.
 A $k$--plane which is not a subspace is called a $k$--flat.  
 Later on we will  omit $q$ in the notation above for the binary case.

 In this paper we continue the study of generic erasure correcting sets.
Let $F(r,s)$ denote the minimum size of a generic $(r,s)$--set.
The bounds for $F(r,s)$ presented below are due to Hollmann and Tolhuizen.
The following   is the best known constructive bound
\begin{theo} \cite{HT2}\label{Thm1} For $2\leq s\leq r$ we have
\begin{equation}
F(r,s)\leq \sum_{i=1}^{s-1}\binom{r-1}{i}.
\end{equation}
\end{theo}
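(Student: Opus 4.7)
The plan is to prove the bound by induction on $r$ via a recursive construction that builds a generic $(r,s)$-set from two smaller generic sets together with one extra vector.

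First, I would reformulate the defining property of a generic $(r,s)$-set combinatorially: $\A \subseteq \FF_2^r$ is a generic $(r,s)$-set if and only if, for every $1 \leq t \leq s$ and every linearly independent tuple $v_1,\ldots,v_t \in \FF_2^r$, there exists $\mathbf{a} \in \A$ such that the inner-product vector $(\langle \mathbf{a},v_1\rangle,\ldots,\langle \mathbf{a},v_t\rangle) \in \FF_2^t$ has Hamming weight exactly one. This is the standard translation via stopping sets: iterative decoding of an erasure pattern $E$ from the checks $\H_\A$ succeeds if and only if no nonempty $F \subseteq E$ is a stopping set for $\H_\A$, and the correctable erasure patterns of size $t \leq s$ run through exactly the $t$-subsets of coordinates whose columns in $H$ are linearly independent, which as $H$ varies exhaust all linearly independent $t$-tuples in $\FF_2^r$.

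Second, using Pascal's identity $\binom{r-1}{i} = \binom{r-2}{i-1} + \binom{r-2}{i}$, one obtains
\[
\sum_{i=1}^{s-1}\binom{r-1}{i} \;=\; 1 \;+\; \sum_{i=1}^{s-2}\binom{r-2}{i} \;+\; \sum_{i=1}^{s-1}\binom{r-2}{i},
\]
which suggests the inductive bound $F(r,s) \leq 1 + F(r-1,s-1) + F(r-1,s)$. Concretely, given a generic $(r-1,s-1)$-set $\A_1$ and a generic $(r-1,s)$-set $\A_2$ of the sizes guaranteed by induction, I would construct
\[
\A \;:=\; \A_2 \,\cup\, \{e_r\} \,\cup\, \{(\mathbf{a},1) : \mathbf{a} \in \A_1\} \subset \FF_2^r,
\]
where $\A_1,\A_2$ are embedded into $\FF_2^r$ via $\mathbf{a} \mapsto (\mathbf{a},0)$. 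The three pieces are disjoint, so $|\A| = 1 + |\A_1| + |\A_2|$, which matches the right-hand side of the identity.

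Third, I would verify the generic property of $\A$ by splitting a linearly independent tuple $v_1,\ldots,v_t$ in $\FF_2^r$ according to the rank of the projections $v'_i = (v_i)_{[r-1]}$ and the last-coordinate pattern $I = \{i : (v_i)_r = 1\}$. If $\mathrm{rank}(v'_1,\ldots,v'_t) = t$, then $\A_2$ applied to the projections supplies the required weight-one pattern, since $(\mathbf{a},0)\cdot v_i = \mathbf{a}\cdot v'_i$ for $\mathbf{a} \in \A_2$. The independence of the $v_i$'s in $\FF_2^r$ forces the rank to be at least $t-1$, and in the remaining case $\mathrm{rank}(v'_i) = t-1$ a forced linear relation $v'_t = \sum_{j<t} c_j v'_j$ together with the constraint $y_t = 1 + \sum_{j<t} c_j y_j$ (where $y_i = (v_i)_r$) makes $I$ nonempty; when $|I|=1$ the vector $e_r$ itself produces the pattern $((v_i)_r)_i$ of weight one, and when $|I| \geq 2$ one uses a shifted vector $(\mathbf{a}_1,1)$ with $\mathbf{a}_1 \in \A_1$ applied to a basis among the projections.

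The main technical obstacle is the sub-case $\mathrm{rank}(v'_i) = t-1$ with $|I| \geq 2$. A direct computation shows that $(\mathbf{a}_1,1)$ realizes a weight-one pattern on $v_1,\ldots,v_t$ if and only if $\mathbf{a}_1$ either satisfies $\mathbf{a}_1 \cdot v'_j = y_j$ for all $j < t$, or satisfies $\mathbf{a}_1 \cdot v'_j = y_j + \delta_{j,j_0}$ for some index $j_0 < t$ with $c_{j_0} = 1$. One must show that the generic property of $\A_1$, applied to the $(t-1)$-tuple $(v'_1,\ldots,v'_{t-1})$ of size $\leq s-1$, produces some $\mathbf{a}_1$ meeting one of these two alternatives; carrying out this case analysis cleanly — in particular, exploiting the freedom in the choice of deviation index together with the constraint $c_{j_0} = 1$ — is the technical heart of the proof and the step I expect to be most delicate.
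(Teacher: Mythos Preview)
The paper does not prove this theorem at all; it is quoted from \cite{HT2} as a known constructive result, so there is no ``paper's own proof'' to compare against. The original Hollmann--Tolhuizen argument is an explicit one-shot construction (a fixed set of low-weight vectors), not a recursion on $r$.

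More importantly, your recursive construction does not work: the step you flag as ``most delicate'' is in fact false. The generic property of $\A_1$ only guarantees, via Proposition~\ref{prop3}, that the image $\{\mathbf a_1M':\mathbf a_1\in\A_1\}$ contains \emph{some} $(s-2)$--flat of $\FF^{s-1}$; it does not let you hit the specific targets $\{y\}\cup\{y+e_{j_0}:c_{j_0}=1\}$. Concretely, take $r=5$, $s=4$, and choose
\[
\A_1=\{x\in\FF^4:x_3=1\},\qquad \A_2=\{x\in\FF^4:x_1=1\}.
\]
By Proposition~\ref{prop3} one checks that $\A_1$ is a generic $(4,3)$--set and $\A_2$ is a generic $(4,4)$--set (each is a single hyperplane flat, and its image under any full-rank $M$ is again a flat of the right dimension). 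Now take $v_1=e_1+e_5$, $v_2=e_2+e_5$, $v_3=e_3$, $v_4=e_1$ in $\FF^5$; these are independent, their projections satisfy $v_4'=v_1'$ (so $c=(1,0,0)$), and $y=(1,1,0,0)$. For $e_5$ the pattern is $(1,1,0,0)$; for $(\mathbf a,0)$ with $a_1=1$ the pattern is $(1,a_2,a_3,1)$; for $(\mathbf a,1)$ with $a_3=1$ the pattern is $(a_1{+}1,\,a_2{+}1,\,1,\,a_1)$. None of these has weight one, so your set $\A=\A_2\cup\{e_5\}\cup\{(\mathbf a,1):\mathbf a\in\A_1\}$ is \emph{not} a generic $(5,4)$--set. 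The recursion $F(r,s)\le 1+F(r-1,s-1)+F(r-1,s)$ therefore cannot be established by this construction; at the very least $\A_1$ would have to be an $(r-1,s-1)$--good set in the sense of $G_1$, which gives a weaker bound.
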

It is clear that any upper bound for  $F(n-k,d-1)$ 
is an upper bound for the stopping distance $\rho(\C)$ of an $[n,k,d]$ code, 
thus $\rho(\C)\leq F(n-k,d-1)$
Therefore, for an $[n,k,d]$ code $\C$  one has the bound
\begin{equation}
\rho(\C)\leq F(n-k,d-1)\leq \sum_{i=1}^{d-2}\binom{n-k-1}{i},
\end{equation}
which turns to be also the best constructive bound for the stopping redunduncy.\\
 We notice that the best known nonconstructive upper bounds for the stopping redundancy 
of a  linear code are given in 
Han and Siegel \cite{HS} and in Han et al \cite{HSV}.

\begin{theo} \cite{HS} \label{theo2} For  an $[n,k,d]$ code $\C$ with $r=n-k$ 
\begin{equation}
\rho(\C)\leq \min\{t\in \NN: \sum_{i=1}^{d-1}\binom{n}{i}
\Bigl(1-\frac{i}{2^i}\Bigr)^{t}<1\}+r-d+1.
\end{equation}
\end{theo}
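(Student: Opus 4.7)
The plan is to apply the probabilistic method: select $t$ rows uniformly and independently from the dual code $\C^\bot$ and then extend to a full parity check matrix of $\C$.

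The key probabilistic input is that for any subset $S\subseteq[n]$ with $|S|=i\leq d-1$, no nonzero codeword of $\C$ has support contained in $S$, so by the standard duality $(\C^\bot)|_S=\FF_2^S$ and the restriction map from $\C^\bot$ to $\FF_2^S$ is surjective. Thus for a uniformly random $\mathbf{v}\in\C^\bot$, the restriction $\mathbf{v}|_S$ is uniform on $\FF_2^S$, and
\[
\Pr[wt(\mathbf{v}|_S)\neq 1]=1-\frac{i}{2^i}.
\]
Picking $\mathbf{v}_1,\ldots,\mathbf{v}_t\in\C^\bot$ independently and uniformly at random, the probability that a fixed $S$ is a stopping set of the matrix with these rows is $(1-i/2^i)^t$. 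A union bound over $S$ with $1\leq|S|\leq d-1$ gives
\[
\Pr[\exists\,S,\;1\leq|S|\leq d-1,\;S\text{ is a stopping set}]\leq\sum_{i=1}^{d-1}\binom{n}{i}\left(1-\frac{i}{2^i}\right)^{t}.
\]
By the choice of $t$ in the theorem statement this probability is strictly less than $1$, so there is a deterministic choice of $\mathbf{v}_1,\ldots,\mathbf{v}_t$ such that every nonempty subset of size at most $d-1$ has some $\mathbf{v}_j$ restricting to it with weight exactly one.

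To turn this into a parity check matrix of $\C$, it suffices to augment with additional rows of $\C^\bot$ so the full row span equals $\C^\bot$. Adjoining rows cannot create new stopping sets, so the stopping distance property is preserved. The bound $t+r-d+1$ then reduces to showing that at most $r-d+1$ augmenting rows are needed, i.e.\ that the span of $\mathbf{v}_1,\ldots,\mathbf{v}_t$ has dimension at least $d-1$. The main obstacle is exactly this rank lower bound. I would first try to deduce it from the covering condition directly, viewing the rows as an $m\times n$ matrix with columns $c_j\in\FF_2^m$ and observing that a subset $S$ is covered if and only if the columns $\{c_j\}_{j\in S}$ admit a coloop; it then seems plausible that for $m<d-1$ no assignment of columns can guarantee a coloop on every $(d-1)$-subset simultaneously, at least for sufficiently large $n$. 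Should a clean matroidal proof prove elusive, a cleaner alternative is to fold the rank requirement into the probabilistic step itself, bounding the probability that the random rows all lie in some proper subspace of codimension $d-1$ via a Gaussian-coefficient estimate and checking that the adjusted threshold on $t$ still coincides with the one claimed.
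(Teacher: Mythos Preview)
The paper does not actually prove this theorem; it is quoted from Han--Siegel \cite{HS} without proof, so there is no in-paper argument to compare against. Your probabilistic set-up and union bound are exactly the standard argument from \cite{HS}, and that part is correct.

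The genuine gap in your proposal is the rank lower bound, which you flag but do not resolve. Neither of your suggested fixes is needed: the bound $\dim\mathrm{span}\{\mathbf{v}_1,\ldots,\mathbf{v}_t\}\geq d-1$ follows deterministically from the covering property you have already secured, via the standard peeling argument. Take any coordinate set $S$ with $|S|=d-1$. Since $S$ is not a stopping set, some row $\mathbf{v}_{i_1}$ has weight one on $S$, say at position $j_1\in S$. Now $S\setminus\{j_1\}$ has size $d-2$ and is not a stopping set either, so some $\mathbf{v}_{i_2}$ has weight one on $S\setminus\{j_1\}$, at position $j_2$. Iterating, you obtain rows $\mathbf{v}_{i_1},\ldots,\mathbf{v}_{i_{d-1}}$ whose restrictions to the columns $j_1,\ldots,j_{d-1}$ form an upper-triangular matrix with ones on the diagonal. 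Hence these $d-1$ rows are linearly independent, and at most $r-(d-1)$ additional dual codewords are needed to complete a parity-check matrix. This is exactly the argument in \cite{HS}; your ``coloop'' intuition is on the right track, but the claim holds for all $n\geq d-1$, not merely for large $n$, and there is no need to perturb the probabilistic step.
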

A closed form expression derived from (I.3) is as follows

\begin{coro} \label{coro1} For  an $[n,k,d]$ code $\C$ with $r=n-k$ 
\begin{equation}
\rho(\C)\leq\frac{\log \sum_{i=1}^{d-1}\binom{n}{i}}{-\log\Bigl(1-\frac{d-1}{2^{d-1}}\Bigr)}+r-d+1
\end{equation}
\end{coro}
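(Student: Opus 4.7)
The plan is to derive~(I.4) from~(I.3) by replacing each factor $\bigl(1 - \frac{i}{2^i}\bigr)$ by a single uniform upper bound, and then solving the resulting scalar inequality for $t$ in closed form.

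The first step is the elementary observation that $g(i) := 1 - \frac{i}{2^i}$ is non-decreasing on $\{1,2,\dots\}$, and hence attains its maximum on $\{1,\dots,d-1\}$ at $i = d-1$. I would verify this by noting that the continuous function $x \mapsto x \cdot 2^{-x}$ has its unique real maximum at $x = 1/\ln 2 \approx 1.44$, so $i \cdot 2^{-i}$ is strictly decreasing for integer $i \geq 2$; combined with the equality $1 \cdot 2^{-1} = 2 \cdot 2^{-2} = \tfrac{1}{2}$ this gives $g(1) = g(2) \leq g(3) < g(4) < \cdots$, so in particular $g(i) \leq g(d-1)$ throughout the relevant range.

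Applying this bound inside the sum appearing in~(I.3) yields
\begin{equation*}
\sum_{i=1}^{d-1}\binom{n}{i}\Bigl(1 - \frac{i}{2^i}\Bigr)^{t} \;\leq\; \Bigl(1 - \frac{d-1}{2^{d-1}}\Bigr)^{t} \sum_{i=1}^{d-1}\binom{n}{i}.
\end{equation*}
Requiring the right-hand side to be strictly less than~$1$, taking logarithms, and dividing by the negative quantity $\log\bigl(1 - \frac{d-1}{2^{d-1}}\bigr)$ then shows that every integer $t$ exceeding
\[
\frac{\log \sum_{i=1}^{d-1}\binom{n}{i}}{-\log\bigl(1 - \frac{d-1}{2^{d-1}}\bigr)}
\]
already satisfies the condition inside the $\min$ in~(I.3). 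Substituting such a $t$ into the bound $\rho(\C) \leq t + r - d + 1$ from Theorem~\ref{theo2} gives~(I.4), with the usual convention that the real-valued right-hand side is to be rounded up.

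There is no serious obstacle: the argument is essentially one line of algebra once the monotonicity of $g$ on the positive integers is in hand. The only minor subtleties are the implicit rounding and the small-$d$ edge cases $d \in \{2,3\}$, neither of which causes trouble since $g(1) = g(2)$ already equals the maximum on the relevant range.
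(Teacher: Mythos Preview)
Your proposal is correct and is exactly the intended derivation: the paper does not spell out a proof of Corollary~\ref{coro1} beyond the remark that it is ``a closed form expression derived from~(I.3)'', and the only natural way to obtain~(I.4) from~(I.3) is precisely your argument --- bound each factor $(1-i/2^i)^t$ by the worst case $i=d-1$, pull it out of the sum, and solve for~$t$. Your treatment of the monotonicity of $i\mapsto 1-i\cdot 2^{-i}$ on the positive integers and of the implicit ceiling is accurate, so nothing further is needed.
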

(where $\log$ is always of base 2). Further improvements upon the probabilistic upper  bound are given in \cite{HSV}.

There is a big gap between 
the lower and upper bounds for $F(r,s)$.

\begin{theo}\cite{HT}\label{theo3} For $1\leq s\leq r$ the following holds 
 \begin{equation}
  r\leq F(r,s)\leq\frac{rs}{-\log(1-{s}{2^{-s}})}.
 \end{equation}
\end{theo}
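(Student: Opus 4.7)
The lower bound $r \leq F(r,s)$ follows immediately from the $s=1$ case: any generic $(r,s)$-set $\A$ is in particular a generic $(r,1)$-set, and for size-one correctable patterns iterative decoding requires, for every nonzero column $h$ of an arbitrary parity-check matrix, some $\mathbf{a}\in\A$ with $\mathbf{a}\cdot h=1$. Since the adversary can make $h$ be any nonzero vector of $\FF^r_2$, this forces $\A$ to avoid every hyperplane, so $\A$ spans $\FF^r_2$ and $|\A|\geq r$.

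For the upper bound I would use a random construction. First, translate iterative decoding into linear-algebraic terms: it succeeds on a correctable pattern whose columns are the linearly independent $h_1,\ldots,h_k$ (with $k\leq s$) iff some ordering $\sigma$ and some $\mathbf{a}_1,\ldots,\mathbf{a}_k\in\A$ satisfy $\mathbf{a}_t\cdot h_{\sigma(t)}=1$ and $\mathbf{a}_t\cdot h_{\sigma(j)}=0$ for all $j>t$. This motivates the following sufficient condition on $\A$: for every $k\leq s$, every $k$-dimensional subspace $V\subseteq\FF^r_2$, and every basis $(h_1,\ldots,h_k)$ of $V$, the set $\A$ contains some $\mathbf{a}$ with $\mathbf{a}\cdot h_i=1$ for exactly one index $i$ and $\mathbf{a}\cdot h_j=0$ for $j\neq i$. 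A short induction on $k$ shows this condition is sufficient: after using such an $\mathbf{a}$ to decode position $i$, apply the condition to the $(k-1)$-dimensional subspace spanned by the remaining basis vectors and recurse.

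Now draw $\A$ as $N$ i.i.d.\ uniform vectors in $\FF^r_2$. For each fixed pair $(V,(h_1,\ldots,h_k))$ the probability that a single uniform $\mathbf{a}$ works is exactly $k\cdot 2^{-k}$ (these are the $k$ "unit-vector" cosets of $V^\perp$ among the $2^k$ cosets in $\FF^r_2$), so the pair is violated by all of $\A$ with probability $(1-k\,2^{-k})^N$. There are at most $2^{rk}/k!$ such pairs for fixed $k$ (bounding the number of unordered bases among ordered linearly independent $k$-tuples), so a union bound yields
\[
\Pr[\A\text{ fails the condition}]\;\leq\;\sum_{k=1}^s\frac{2^{rk}}{k!}\bigl(1-k\,2^{-k}\bigr)^N.
\]
Because $k\,2^{-k}$ is nonincreasing on $k\geq 2$ (and equals $1/2$ at $k\in\{1,2\}$), the sum is dominated by its $k=s$ term, which is made smaller than $1/s!$ once $N\geq rs/(-\log(1-s\,2^{-s}))$; absorbing the $O(1)$ slack needed to push the whole sum below $1$ into the implicit ceiling delivers the stated bound.

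The only subtle step is isolating the right sufficient condition. A naive argument demanding that $\A$ meet every nonzero coset of $V^\perp$ produces only the weaker denominator $-\log(1-2^{-s})$ via a coupon-collector union bound; requiring $\A$ to meet just one of the $k$ unit-vector cosets at each level of the flag produced by iterative decoding is precisely what buys the extra factor of $s$ inside the logarithm.
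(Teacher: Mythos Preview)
The paper does not prove Theorem~\ref{theo3}; it is quoted from \cite{HT}, with the remark that ``the upper bound is derived by a probabilistic approach.'' Your argument is a correct instance of exactly that approach and matches, in spirit, the paper's own proof of its sharpening (Theorem~\ref{theo13} and Corollary~\ref{coro7}): count the ``bad'' events (linearly independent $k$-tuples on which no element of a random $\A$ has weight-one image), bound their number crudely by $2^{rk}/k!$, and apply a union bound with success probability $k2^{-k}$ per trial.

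Two remarks on the execution. First, you take the union over all $k\le s$, which is safe but unnecessary: Proposition~\ref{prop2} (from \cite{HT2}) says that the condition for $k=s$ alone already characterizes generic $(r,s)$--sets. Using only $k=s$ removes the sum entirely and in fact yields the slightly stronger bound $F(r,s)<\dfrac{rs-\log s!}{-\log(1-s2^{-s})}$ of Corollary~\ref{coro7}. Second, your claim that ``the sum is dominated by its $k=s$ term'' deserves one more line: since $k2^{-k}\ge s2^{-s}$ for $1\le k\le s$, at $N\ge rs/(-\log(1-s2^{-s}))$ every term satisfies $\tfrac{2^{rk}}{k!}(1-k2^{-k})^N\le \tfrac{2^{-r(s-k)}}{k!}$, and summing gives at most $\tfrac{1}{s!}+\tfrac{1}{2^r-1}<1$ for $r\ge s\ge 2$; the case $s=1$ is the trivial equality $F(r,1)=r$. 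So the slack is genuinely absorbed, not merely asserted.
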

The upper bound is derived by a probabilistic approach.

In \cite{HT}  introduced and studied a  related notion of
 $(r,s)$-good set.\\
A subset $\A\subseteq {\FF}^r$
is called $(r,s)$-1 good if for any $s$ linearly independent 
vectors ${\bf v_1},\ldots,{\bf v_s}\in {\FF}^r_2$
there exists a vector ${\bf c}\in \A$ such that the inner product 
${\bf (c,v_j)}=1$ for $j=1,\ldots,s$.\\
Furthermore, 
$\A$ is called $(r,s)$-good if for any linearly independent vectors 
${\bf v_1},\ldots,{\bf v_s}\in {\FF}^r_2$ and for arbitrary $(x_1,\ldots,x_s)\in\{0,1\}^s$
there exists ${\bf c}\in \A$ such that ${\bf (c,v_j)}=x_j$ for $j=1,\ldots,s.$

We denote by $G_1(r,s)$ the minimum cardinality $|\A|$ for which there exists a 
$(r,s)$-1 good set $\A$.
The corresponding notation for $(r,s)$--good sets is $G(r,s)$.
Hollman and Tolhuizen  observed that  these two notions are essentially the same.
\begin{prop} \cite{HT} \label{prop1} Let  $\A\subseteq {\FF}^r$ be an $(r,s)$-1 good set, then $\A\cup \{{\bf 0}\}$
is an $(r,s)$-good set. Moreover, one has  $G_1(r,s)=G(r,s)-1.$
\end{prop}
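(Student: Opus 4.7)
The plan is to prove the proposition in two parts: first the structural claim that $(r,s)$-1 goodness of $\A$ implies $(r,s)$-goodness of $\A \cup \{\mathbf{0}\}$, which will immediately yield $G(r,s) \le G_1(r,s) + 1$; then separately the reverse inequality $G_1(r,s) \le G(r,s) - 1$, which together with the first part pins down the exact relation.

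For the forward direction, I would fix linearly independent $v_1, \ldots, v_s \in \FF^r_2$ and a target $x = (x_1, \ldots, x_s) \in \{0,1\}^s$. If $x = \mathbf{0}$ the choice $c = \mathbf{0} \in \A \cup \{\mathbf{0}\}$ works trivially, so assume $x \ne \mathbf{0}$ and pick any coordinate $j_0$ with $x_{j_0} = 1$. The idea is to reduce to the 1-good hypothesis via a basis modification inside $\sp(v_1, \ldots, v_s)$: set $w_j = v_j$ when $x_j = 1$ and $w_j = v_j + v_{j_0}$ when $x_j = 0$. A direct calculation shows $w_1, \ldots, w_s$ remain linearly independent, so the $(r,s)$-1 good property produces $c \in \A$ with $(c, w_j) = 1$ for every $j$. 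For indices with $x_j = 1$ this gives $(c, v_j) = 1 = x_j$ directly; for indices with $x_j = 0$, linearity of the inner product yields $(c, v_j) = (c, w_j) + (c, v_{j_0}) = 1 + 1 = 0 = x_j$, as required.

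For the reverse inequality, I would start from a minimum $(r,s)$-good set $\B$ of size $G(r,s)$ and exploit the \emph{translation invariance} of the good property: for any $b \in \FF^r_2$ the shifted set $\B + b$ is still $(r,s)$-good, since requiring $(c + b, v_j) = y_j$ is the same as requiring $(c, v_j) = y_j + (b, v_j)$, i.e., merely a relabelling of the target coordinates. Choosing $b \in \B$ ensures $\mathbf{0} \in \B + b$, and then $(\B + b) \ssm \{\mathbf{0}\}$ has size $G(r,s) - 1$. To see it is $(r,s)$-1 good, I would apply the good property of $\B + b$ with target $\mathbf{1}$ to any linearly independent $v_1, \ldots, v_s$: the resulting witness $c$ satisfies $(c, v_j) = 1$ for every $j$ and is therefore automatically nonzero, so it lies in $(\B + b) \ssm \{\mathbf{0}\}$.

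Neither step is genuinely difficult. The single observation I would highlight is the translation invariance, which is exactly what forces the two quantities to differ by \emph{precisely} one rather than merely being within one of each other; without it one only gets $G_1(r,s) \le G(r,s)$. The basis-modification trick in the forward direction is slightly clever but becomes transparent once one notices that a single reference direction $v_{j_0}$ with $x_{j_0} = 1$ can be used to convert every ``0'' target into a ``1'' target by XORing.
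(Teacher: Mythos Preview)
The paper does not supply its own proof of this proposition; it is simply quoted from \cite{HT}. Your argument is correct in both directions: the basis modification $w_j = v_j + (1-x_j)v_{j_0}$ is a clean way to reduce an arbitrary target $x\ne \mathbf{0}$ to the all-ones target, and the translation-invariance observation is precisely what upgrades the obvious inequality $G_1(r,s)\le G(r,s)$ to the exact relation $G_1(r,s)=G(r,s)-1$. The only step worth spelling out a bit more in a final write-up is the linear independence of the $w_j$, but your sketch already indicates the triangular change of basis that makes this immediate.
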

Later on we consider only $(r,s)$-1  good sets
and call them for short just $(r,s)$--sets. 
Obviously every $(r,s)$--set is a generic $(r,s)$--set, thus $G_1(r,s)\geq F(r,s)$.
\begin{theo} \cite{HT}\label{theo4}. For $1\leq s\leq r$ the following holds
\begin{equation}
2^{s-1}(r-s+2)-1\leq G_1(r,s)\leq  \dfrac{rs-\log s!}{-\log(1-2^{-s})}.
\end{equation}
\end{theo}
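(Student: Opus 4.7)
The plan is to handle the two bounds separately. For the upper bound, I would use the probabilistic method: sample $N$ vectors $c_1,\ldots,c_N$ independently and uniformly from $\FF_2^r$. Since the $(r,s)$-1 good condition is symmetric in the vectors $v_1,\ldots,v_s$, it is natural to union-bound over \emph{unordered} $s$-subsets of linearly independent vectors in $\FF_2^r$, of which there are at most $\prod_{i=0}^{s-1}(2^r - 2^i)/s! \leq 2^{rs}/s!$. By linear independence, a single random $c$ satisfies $(c,v_j)=1$ for all $j$ with probability exactly $2^{-s}$, so the probability that no sample works for a fixed such subset is $(1-2^{-s})^N$. The union bound then forces the total failure probability below $1$ as soon as $N > (rs - \log s!)/(-\log(1-2^{-s}))$; the $1/s!$ saving obtained by passing to unordered subsets is precisely what yields the $-\log s!$ correction in the stated bound.

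For the lower bound, I would invoke Proposition 1: if $\A$ is $(r,s)$-1 good then $\B := \A \cup \{0\}$ is $(r,s)$-good, so it suffices to prove $|\B| \geq 2^{s-1}(r-s+2)$. Fix any $(s-1)$-subspace $W \subseteq \FF_2^r$ and partition $\B$ according to the linear functional it induces on $W$: for each $\ell \in W^*$ set $\B_W^\ell := \{c \in \B : (c,w) = \ell(w) \text{ for all } w \in W\}$. Since $|\B| = \sum_{\ell \in W^*}|\B_W^\ell|$ and $|W^*| = 2^{s-1}$, the bound follows by summation once I show $|\B_W^\ell| \geq r - s + 2$ for every $\ell$.

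The key observation for the per-$\ell$ bound is that for any $u \in \FF_2^r \setminus W$, the space $V = W + \langle u \rangle$ is an $s$-subspace and $\ell$ has exactly two extensions $\ell_0, \ell_1 \in V^*$, distinguished by $\ell_i(u)=i$. Applying $(r,s)$-goodness of $\B$ to each extension produces elements of $\B_W^\ell$ realising both values of $(c,u)$. Choosing coordinates so that $W = \mathrm{span}(e_1,\ldots,e_{s-1})$ and writing $c=(c',c'')$ with $c'' \in \FF_2^{r-s+1}$, the members of $\B_W^\ell$ have $c'$ fixed, so $\B_W^\ell$ is in bijection with its image $\S := \{c'' : c \in \B_W^\ell\} \subseteq \FF_2^{r-s+1}$; a short computation translates the previous condition into the statement that for every nonzero $u'' \in \FF_2^{r-s+1}$, the set $\S$ meets both the linear hyperplane $(u'')^\perp$ and its complement.

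The crux of the argument, and what I expect to be the main obstacle, is deducing $|\S| \geq r - s + 2$ from this. Not being contained in any linear hyperplane says $\S$ spans $\FF_2^{r-s+1}$, while meeting every linear hyperplane is equivalent to $\S$ not being contained in any affine hyperplane that avoids the origin, which in turn is equivalent to $0$ lying in the affine hull of $\S$. Combining these, the affine hull of $\S$ coincides with its linear hull and equals all of $\FF_2^{r-s+1}$; since the affine hull of a $k$-element set over $\FF_2$ has at most $2^{k-1}$ elements, this forces $k \geq r - s + 2$. Summing over $\ell \in W^*$ then yields $|\B| \geq 2^{s-1}(r-s+2)$, and hence $|\A| \geq 2^{s-1}(r-s+2) - 1$.
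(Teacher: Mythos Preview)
Your proof is correct on both halves. The upper bound via a union bound over \emph{unordered} independent $s$-sets is exactly the probabilistic argument the paper attributes to \cite{HT}; the $1/s!$ saving from passing to unordered sets is precisely what produces the $-\log s!$ term.

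For the lower bound the paper does not reproduce the proof, but it records (in the paragraph following the proof of Theorem~\ref{theo10}) that the argument in \cite{HT} goes through the recursion $G_1(k,s)\geq 2G_1(k-1,s-1)+1$, combined with the base case $G_1(k,1)=k$, which unwinds to $2^{s-1}(r-s+2)-1$. Your route is different in presentation: instead of peeling off one vector at a time, you fix an entire $(s-1)$-subspace $W$ at once, partition $\B=\A\cup\{\mathbf 0\}$ into the $2^{s-1}$ blocks $\B_W^\ell$, and bound each block directly by an affine-hull argument in $\FF_2^{r-s+1}$. Conceptually the two are equivalent---your partition is what one obtains after $s-1$ applications of the recursive splitting, and your per-block inequality $|\B_W^\ell|\geq r-s+2$ is exactly the base case $G(r-s+1,1)=r-s+2$ in disguise---but your packaging is self-contained and avoids the intersecting-code language (Lemma~\ref{lemm1}) that underlies the recursion in \cite{HT}. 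Either way the bound is the same; your version trades a short induction for a slightly more elaborate one-shot counting.
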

The upper bound is obtained again by a probabilistic argument.

The  paper is organized as follows.\\
 In Section 2 we obtain some properties of generic $(r,s)$--erasure correcting sets 
and $(r,s)$--sets which we use later. \\
In Section 3 we show that 
the problem we study here is closely related to  so called $s$--wise intersecting
codes studied in the literature (\cite{CZ},\cite{CELS}). This  allows us to get more insight
about the problems mentioned above. \\
In Section 4 we focus on bounds 
 for $F(r,s)$ and $G_1(r,s)$. We improve  the  bounds (I.5) and (I.6) 
in Theorems 11--15.
In particular, we show that for $2\leq s< r$ we have
$$
3\cdot2^{s-2}(r-s)+5\cdot2^{s-2}-2\leq G_1(r,s)
\leq \frac{(r-s+1)s+2}{-\log (1-2^{-s})},$$
$$
 F(r,s)>\max \{2^{s-1}+r-s,G_1(r-\lceil{s}/{2}\rceil,\lfloor{s}/{2}\rfloor)\},$$ 
$$F(r,s)<\dfrac{rs-\log s!}{-\log(1-s2^{-s})}.
$$
%In particular, the upper bound  for $G_1(r,s)$ allows to improve
 %the lower bound in \cite{CZ} for the rate of  $s$--wise intersecting codes.\\   
In Section 5 we show  that  hypergraph covering 
can be used  to obtain in a simple way good upper bounds for generic erasure correcting sets,
 $(r,s)$--sets,
and stopping redundancy of a linear code.

\section{Properties of generic $(r,s)$--sets }

Hollmann and Tolhuizen obtained the following characterization
of generic $(r,s)$--sets.
\begin{prop}\cite{HT2} \label{prop2}
 A subset $\A\subset \FF^r$ is generic $(r,s)$--set if and only if for every
full rank matrix $M\in \FF^{r\times s}$ there exists ${\bf a}\in \A$ such that 
$wt({\bf a}M)=1$. 
\end{prop}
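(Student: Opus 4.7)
The plan is to prove the two implications separately; the reverse direction is the substantive one and I would handle it by induction on $s$.

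The forward direction is direct: given a full rank matrix $M \in \FF^{r\times s}$, pad $M$ with any $r-s$ additional columns to form an $r\times r$ matrix $H$ of rank $r$ (so $H$ is a parity check matrix of a code of length $r$ and codimension $r$). The erasure pattern $E=\{1,\ldots,s\}$ then satisfies $H|_E = M$, has size $s$, and is correctable because $M$ has full column rank. Being a generic $(r,s)$-set, $\A$ makes iterative decoding via $\H_\A$ succeed on $E$, and its very first step requires some ${\bf a}\in\A$ with $wt({\bf a}H|_E)=wt({\bf a}M)=1$.

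For the reverse direction, my first observation is that because iterative decoding peels off erasures one at a time, the statement ``$\H_\A$ decodes every correctable $E$ with $|E|\le s$'' is equivalent to ``for every $1\le k\le s$ and every full rank $M'\in\FF^{r\times k}$, some ${\bf a}\in\A$ satisfies $wt({\bf a}M')=1$''. The real task is therefore to show that the $k=s$ form of the condition already implies the $k<s$ forms. I would induct on $s$: the base $s=1$ is immediate, and the inductive step reduces to showing that the $k=s$ condition implies the $k=s-1$ condition. Given this implication, the induction hypothesis upgrades the $k=s-1$ condition to $\A$ being a generic $(r,s-1)$-set, which decodes all correctable patterns of size $\le s-1$; for $|E|=s$, the $k=s$ hypothesis itself supplies the first peeling step, leaving a correctable residual of size $s-1$ that is handled by the inductive conclusion.

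For the key implication, suppose $M'=[m_1,\ldots,m_{s-1}]$ has rank $s-1$ and, for contradiction, that no ${\bf a}\in\A$ achieves $wt({\bf a}M')=1$. Choose any $u$ outside the column span of $M'$ (possible since $s-1<r$) and form the doubled-column matrix
\[
M \;=\; [m_1,\ldots,m_{s-2},\, u,\, u+m_{s-1}] \in \FF^{r\times s};
\]
a brief check shows that $M$ has rank $s$. Writing $p={\bf a}u$ and $q={\bf a}m_{s-1}$, the last two entries of ${\bf a}M$ contribute weight $2p$ when $q=0$ and weight $1$ when $q=1$, so $wt({\bf a}M)$ always differs from $wt({\bf a}M')$ by an even number. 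Since $wt({\bf a}M')\ne 1$ for every ${\bf a}\in\A$ by assumption, the same holds for $wt({\bf a}M)$, contradicting the $k=s$ hypothesis applied to $M$. The main obstacle is precisely this doubled-column trick: it is not a priori obvious that insisting on a weight-one parity check only for $r\times s$ matrices forces the analogous property for all $r\times k$ matrices with $k<s$, and the asymmetric extension by $u+m_{s-1}$ is what locks $wt({\bf a}M)$ into the same parity class as $wt({\bf a}M')$ regardless of which ${\bf a}$ we consider.
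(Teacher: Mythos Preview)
The paper does not give its own proof of this proposition; it is quoted from Hollmann and Tolhuizen \cite{HT2} and used as a starting point for the further characterizations in Proposition~\ref{prop3} and Proposition~\ref{prop4}. There is therefore nothing in the present paper to compare your argument against, but your proof stands on its own and the doubled-column trick for the reverse implication is a clean and correct idea.

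One phrasing issue deserves tightening. You conclude that ``$wt({\bf a}M)$ always differs from $wt({\bf a}M')$ by an even number'' and then infer that $wt({\bf a}M')\neq 1$ forces $wt({\bf a}M)\neq 1$. Parity alone does not justify this step (for instance $wt({\bf a}M')=3$ and $wt({\bf a}M)=1$ have the same parity). What your own computation actually establishes is the stronger statement that
\[
wt({\bf a}M)-wt({\bf a}M')\in\{0,2\},
\]
since the pair $(p,p+q)$ contributes weight $2p$ when $q=0$ and weight $1$ when $q=1$, while the single coordinate $q$ of ${\bf a}M'$ contributes $0$ and $1$ in those two cases. With the difference pinned to $\{0,2\}$, $wt({\bf a}M')=0$ gives $wt({\bf a}M)\in\{0,2\}$ and $wt({\bf a}M')\ge 2$ gives $wt({\bf a}M)\ge 2$, so indeed $wt({\bf a}M)\neq 1$ and the contradiction goes through. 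This is a matter of stating what you already computed, not a gap in the mathematics; once corrected, both directions are complete.
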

We extend this characterization as follows
\begin{prop}\label{prop3}
 A subset $\A\subset \FF^r$ is a generic $(r,s)$--set if and only if
for every full rank matrix $M\in\FF^{r\times s}$ the set 
$\{{\bf x}={\bf a}M: {\bf a}\in \A \}\subset \FF^s$
contains a hyperlane  not passing through the origin.
\end{prop}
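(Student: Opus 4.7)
The plan is to prove both implications by reducing to the weight-one characterization of Proposition \ref{prop2}, using as a bridge the elementary observation that a subset $S\subseteq\FF^s$ contains an affine hyperplane off the origin if and only if $S$ meets every basis of $\FF^s$. Indeed, if $S\supseteq\{{\bf x}:{\bf h}\cdot{\bf x}=1\}$ and ${\bf v}_1,\ldots,{\bf v}_s$ is any basis, the inner products ${\bf h}\cdot{\bf v}_j$ cannot all vanish (else the ${\bf v}_j$ would all lie in a hyperplane through $0$), so some ${\bf v}_j$ lies in $S$; conversely, if $S$ meets every basis then $\FF^s\setminus S$ contains no basis, hence fails to span $\FF^s$, hence sits inside some linear hyperplane $H_0$, giving $S\supseteq \FF^s\setminus H_0$.

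For the $(\Leftarrow)$ direction I fix any full rank $M$ and pick ${\bf h}\neq 0$ with $\{{\bf x}:{\bf h}\cdot{\bf x}=1\}\subseteq \A M$ together with a coordinate $i$ satisfying $h_i=1$. Then ${\bf h}\cdot e_i=1$, so the standard basis vector $e_i$ lies in this hyperplane, and some ${\bf a}\in\A$ achieves ${\bf a}M=e_i$, giving $wt({\bf a}M)=1$. Proposition \ref{prop2} then declares $\A$ to be a generic $(r,s)$-set.

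For the $(\Rightarrow)$ direction I assume $\A$ is generic and let $M$ be full rank. By the bridge observation it suffices to show that $\A M$ intersects every basis of $\FF^s$. Given such a basis ${\bf v}_1,\ldots,{\bf v}_s$, I assemble its vectors as the rows of an invertible matrix $W\in\FF^{s\times s}$ and note that $MW^{-1}$ is still a full rank $r\times s$ matrix. Proposition \ref{prop2} supplies ${\bf a}\in\A$ with ${\bf a}(MW^{-1})=e_j$ for some $j$, and right-multiplying by $W$ gives ${\bf a}M=e_jW={\bf v}_j\in\A M$.

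The only step of real substance is recognizing the basis-hitting reformulation of the hyperplane condition; once that dictionary is in place, both implications follow immediately from Proposition \ref{prop2} and the fact that right multiplication by an invertible $s\times s$ matrix preserves full rank of an $r\times s$ matrix.
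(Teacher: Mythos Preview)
Your proof is correct and follows essentially the same route as the paper: both hinge on the equivalence between ``$\A M$ contains an $(s-1)$--flat'' and ``$\A M$ meets every basis of $\FF^s$,'' and both establish the basis-hitting property by right-multiplying $M$ by an invertible $s\times s$ matrix to reduce to Proposition~\ref{prop2}. The only cosmetic difference is that the paper packages the basis-hitting statement as a standalone result (Proposition~\ref{prop4}), proving it in slightly greater generality (for linearly independent sets of any size $t\le s$), whereas you prove exactly what is needed for $t=s$ and handle the $(\Leftarrow)$ direction by directly exhibiting a unit vector in the flat rather than invoking the bridge observation again.
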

\begin{proof} For  integers $1\leq t\leq s<r$ and a set of linearly 
independent vectors 
$S=\{\bf v_1,\ldots,v_t\}\subset \FF^s$, let   $\A\subset \FF^r$ be a subset satisfying
 the following property 
with respect to $\{\bf v_1,\ldots,v_t\}$:\\
(P)~ For every full rank matrix $M\in \FF^{r\times s}$ 
 there exists a vector ${\bf a}\in \A$
such that ${\bf a}M={\bf v_i}$ for some $i\in [t]$.

 We claim then that $\A$ satisfies this property 
with respect to every linearly independent set  of vectors
 $\{\bf x_1,\ldots,x_t\}\subset \FF^s$. 

To prove the claim, we have to show that given a full rank matrix $M\in \FF^{r\times s}$,
there exists ${\bf a}\in \A$ such that   ${\bf a}M={\bf x_i}$ for some $i\in [t]$.
 Let  $N\in \FF^{s\times s}$ be an invertible  matrix
such that ${\bf v_i}N={\bf x_i}$ for $i=1,\ldots,t$. Then, in view of the property (P) of $\A$,  there exists 
${\bf a}\in \A$ such that ${\bf a}(MN^{-1})={\bf v_i}$ for some $i\in[t]$
 and hence ${\bf a}M= {\bf v_i}N={\bf x_i}$.\\
Let now $t=s$ and let $S$ be the set of $s$  
unit vectors in $\FF^s$. Then the claim (together with Proposition \ref{prop2}) gives the following
 analogue of Proposition \ref{prop2}.

 \begin{prop}\label{prop4} A set $\A\subset \FF^r$ is generic $(r,s)$--set if and only if for any given set
of linearly independent vectors 
$\{\bf v_1,\ldots,v_s\}\subset \FF^s$ and every
full rank matrix $M\in \FF^{r\times s}$ there exists ${\bf a}\in \A$ 
such that ${\bf a}M={\bf v_i}$ for some $i\in [s]$. 
\end{prop}

Note  also that for $|S|=t=1$ we have $(r,s)$--sets  and the claim implies the following condition (shown in \cite{HT}):  
$\A\subset \FF^r$ is an $(r,s)$--set if and only if
for every full rank matrix $M\in\FF^{r\times s}$ the set 
$\{{\bf x}\in \FF^s:{\bf x}={\bf a}M, {\bf a}\in \A \}$
contains all nonzero  vectors.
This condition clearly means that $\A$ meets every $(r-s)$--flat.
 
Let now $\A$ be a generic $(r,s)$--set and let 
 $M\in \FF^{r\times s}$ be a matrix of rank $s$. Let also 
${\bf u_1,\ldots,u_s }\in \FF^s$ be such that
 $\{{\bf a}M: {\bf a}\in \A\}\cap \{\bf u_1,\ldots,u_s\}=\emptyset$.
Then Proposition 4 implies that the dimension dimspan$\{{\bf u_1,\ldots,u_s}\}\leq s-1$.
Thus, $\FF^s\setminus$span$\{\bf u_1,\ldots,u_s\}$ contains a hyperplane
not passing through the origin. 

Furthermore, suppose that for every full rank matrix $M\in \FF^{r\times s}$ there exists an $(s-1)$--flat 
$\U\subset \{{\bf a}M: {\bf a}\in \A\}$.  Note then that for every linearly independent vectors ${\bf u_1,\ldots,u_s}\in \FF^s$
we have $\{ {\bf u_1,\ldots,u_s}\}\cap \U\neq \emptyset$. This, in view of Proposition 4, implies that $A$   
is a generic $(r,s)$--set.
\end{proof}

Let $\A\in \FF^r$ be a generic $(r,s)$--set.
Let us represent $\A$ by an $|\A|\times r$ matrix $A$ where the rows are the vectors of $\A$.
Let also $N\in \FF^{r\times r}$ be an invertible matrix.
Then we get the following.
\begin{coro}\label{coro2}
(i) In every set of  $s$ columns of $AN$ there is a subset of $s-1$ columns
 that contains each $(s-1)$--tuple. \\
(ii) $\A$ hits  at least $2^{s-1}\bey r{r-s}$ ~  $(r-s)$--flats. \\
(iii) $|\A|\geq 2^{s-1}+r-s$.
\end{coro}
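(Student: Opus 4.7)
Plan. For (i), given any invertible $N$ and any $s$ columns of $AN$, pack the corresponding columns of $N$ into an $r\times s$ submatrix $M$ of full rank $s$. Proposition 3 supplies an $(s-1)$-flat $U\subset\{{\bf a}M:{\bf a}\in\A\}\subset\FF^s$. Since $U$ is a hyperplane not through the origin, it is defined by an equation $\sum_i\lambda_i x_i=1$ in which some coefficient $\lambda_{i_0}=1$; solving for $x_{i_0}$ shows that the projection $U\to\FF^{s-1}$ onto the coordinates $[s]\setminus\{i_0\}$ is a bijection. Hence the $s-1$ columns of $AN$ obtained by discarding the $i_0$-th of the chosen $s$ columns realize every $(s-1)$-tuple among their rows.

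For (ii), fix an $(r-s)$-subspace $V\subset\FF^r$ and choose a full-rank $M\in\FF^{r\times s}$ whose kernel is $V$. The cosets of $V$ are parametrized bijectively by points of $\FF^s$ via ${\bf a}\mapsto {\bf a}M$, and the $2^s-1$ non-zero points index precisely the $(r-s)$-flats whose linear part is $V$. Proposition 3 guarantees that $\{{\bf a}M:{\bf a}\in\A\}$ contains an $(s-1)$-flat of $2^{s-1}$ non-zero points, so at least $2^{s-1}$ of the flats with linear part $V$ are met by $\A$. Summing over the $\bey{r}{r-s}$ choices of $V$, whose coset families are disjoint, yields the bound $2^{s-1}\bey{r}{r-s}$.

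For (iii), I would first show that $\A$ must linearly span $\FF^r$: otherwise $\A\subset\{{\bf x}:\langle{\bf x},{\bf w}\rangle=0\}$ for some ${\bf w}\ne 0$, and choosing a full-rank $M\in\FF^{r\times s}$ with ${\bf w}$ as its first column forces $\{{\bf a}M:{\bf a}\in\A\}$ into $\{x\in\FF^s:x_1=0\}$, a hyperplane through the origin sharing its $2^{s-1}$ points with no $(s-1)$-flat, contradicting Proposition 3. Hence $A$ has rank $r$, and after a row permutation of $\A$ and an invertible column transformation $N$ we may take $AN=\left(\begin{smallmatrix}I_r\\B\end{smallmatrix}\right)$ with $B\in\FF^{(m-r)\times r}$. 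Applying (i) to the first $s$ columns of $AN$, and then permuting within these $s$ columns (with a matching permutation of the first $s$ rows, which preserves $\A$ and restores $I_r$ on top), we may assume the first $s-1$ columns of $AN$ realize all $2^{s-1}$ $(s-1)$-tuples. The top $r$ rows of these $s-1$ columns come from $I_r$ and yield only the $s$ distinct tuples ${\bf 0},{\bf e}_1,\ldots,{\bf e}_{s-1}$; hence the remaining $m-r$ rows, taken from $B$, must realize the other $2^{s-1}-s$ tuples, giving $m\ge 2^{s-1}+r-s$.

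The main obstacle is the bookkeeping in (iii): one must combine the column transformation that reduces $A$ to the form $\left(\begin{smallmatrix}I_r\\B\end{smallmatrix}\right)$ with the further column permutation provided by (i) into a single invertible $N$, while using a compensating row permutation to keep $I_r$ on top at the moment when distinct rows in the first $s-1$ columns are counted. Parts (i) and (ii) are then essentially direct consequences of Proposition 3.
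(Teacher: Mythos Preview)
Your proof is correct and follows essentially the same route as the paper: parts (i) and (ii) are read off directly from Proposition~\ref{prop3}, and part (iii) reduces to the form with $I_r$ on top and then counts how many distinct $(s-1)$-tuples the unit-vector rows can supply in the chosen $s-1$ columns. Your write-up is in fact more careful than the paper's---you justify explicitly why $\A$ must span $\FF^r$ (the paper simply asserts ``without loss of generality we may assume that $A$ contains $r$ unit vectors''), and you spell out the projection-from-a-flat argument in (i) that the paper leaves implicit; the only difference in the final count for (iii) is cosmetic (the paper tallies the $2^{s-1}-1$ nonzero tuples against the $r-s+1$ zero rows, while you tally the $s$ tuples produced by $I_r$ against the $2^{s-1}-s$ that must come from $B$).
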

\begin{proof} (i) Note first that
the rows of $AN$ also define a generic $(r,s)$--set. Indeed, in view of Proposition \ref{prop2}
for every full rank matrix $M\subset \FF^{r\times s}$ (and hence for $NM$)
  the matrix $A(NM)=(AN)M$ contains a  row of weight one. 
Now the statement  follows from Proposition \ref{prop3}. \\
(ii) Proposition \ref{prop3} implies that if $\A\subset \FF^r$ is a generic $(r,s)$--set,
then $A$ hits at least $2^{s-1}$ cosets of every $(r-s)$--subspace in $\FF^r$. 
This implies the statement.\\
(iii) Without loss of generality we may assume that $A$ contains
 $r$  unit vectors. Now the statement follows since
there exist $s-1$ columns of $A$ that contain all nonzero $(s-1)$--
tuples and $r-s+1$ zero tuples.
\end{proof}

\section {Relation to other problems}
In this section we show the relationship between $(k,s)$--sets and
$s$--wise intersecting codes

{\it Intersecting Codes}: A linear $[n,k]_q$ code $\C$  over a field $\FF_q$ 
is called {\it intersecting} if any two 
nonzero codewords have a common nonzero coordinate.
 Intersecting codes where introduced in \cite{KS} and have been studied by several authors 
\cite{KS}, \cite{M}, \cite{CL}, \cite{CELS}, \cite{CZ}.

 A more general notion of {\it $s$-wise intersecting} codes was introduced in \cite{CL}.
A set of vectors $A\subset \FF_q^n$ is called $s$--wise intersecting if there is a coordinate where all the vectors have a nonzero element.\\
An  $[n,k]_q$ code  is called  $s$-wise intersecting ($s\geq 2$) if 
 every subset of  $s$ independent vectors in it is $s$--wise intersecting.

{\it Problem 1}
Given integers $2\leq s\leq k$,  determine $n_q(k,s)$ (in case $q=2$ we write $n(k,s)$),
 the minimum length $n$ of an
 $s$-wise intersecting $[n,k]_q$--code.
\begin{prop} \label{prop5}
The elements of a $(k,s)$--set $A\subseteq \FF^k_2$ with $|A|=n$, represented as columns of a matrix, give a generator matrix of 
 an $s$-wise intersecting $[n,k]$ code. Conversely, the columns of a generator matrix of an  $s$-wise intersecting $[n,k]$ code form
a  $(k,s)$--set.
As a consequence we have
$G_1(k,s)=n(k,s)$.
\end{prop}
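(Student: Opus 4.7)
The plan is to set up a direct bijection between $(k,s)$--sets $A \subseteq \FF^k$ of cardinality $n$ and generator matrices of $s$-wise intersecting $[n,k]$ codes, and to observe that both defining conditions translate into the very same statement about inner products. Given $A = \{a_1, \ldots, a_n\} \subseteq \FF^k$, I would form the $k \times n$ matrix $G$ whose $j$th column is $a_j$. The rows of $G$ span a subspace $\C \subseteq \FF^n$, and the key identity driving the whole argument is that for any $v \in \FF^k$ and any index $j \in [n]$, the $j$th coordinate of the vector $vG$ equals the inner product $({\bf v,a_j})$.

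First I would verify that if $A$ is a $(k,s)$--set then $A$ must span $\FF^k$: otherwise some nonzero $v \in \FF^k$ would satisfy $({\bf v,a}) = 0$ for every $a \in A$, and extending $v$ to an $s$-tuple of linearly independent vectors (possible since $s \leq k$) would contradict the definition of a $(k,s)$--set. Hence $G$ has rank $k$, it is a legitimate generator matrix, and the map $v \mapsto vG$ from $\FF^k$ into $\FF^n$ is injective and rank-preserving. For the forward direction, any $s$ linearly independent codewords of $\C$ can be written as $v_1 G, \ldots, v_s G$ with $v_1, \ldots, v_s$ necessarily linearly independent in $\FF^k$; the $(k,s)$--set property then supplies some $a_j \in A$ with $({\bf v_i,a_j}) = 1$ for every $i \in [s]$, which is precisely a coordinate where all $s$ codewords are simultaneously nonzero. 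Hence $\C$ is $s$-wise intersecting.

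For the converse, I would start from a generator matrix $G$ of an $s$-wise intersecting $[n,k]$ code and take its columns $a_1, \ldots, a_n$ as the proposed $(k,s)$--set. For any linearly independent $v_1, \ldots, v_s \in \FF^k$, the codewords $v_i G$ are also linearly independent by injectivity of $v \mapsto vG$, so the $s$-wise intersecting property yields a coordinate $j$ with $(v_i G)_j = ({\bf v_i,a_j}) = 1$ for every $i$; this is exactly the $(k,s)$--set condition. Since both constructions preserve cardinality and are manifestly inverse to each other, the two extremal parameters agree and $G_1(k,s) = n(k,s)$. The only mildly delicate step I foresee is the spanning argument that guarantees $G$ has full row rank; everything else is direct bookkeeping with the key identity.
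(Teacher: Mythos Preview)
Your proposal is correct and follows essentially the same route as the paper: form the $k\times n$ matrix $G$ whose columns are the elements of $A$, use the identity $(vG)_j=(v,a_j)$, and verify each direction by translating the $(k,s)$--set condition into the $s$--wise intersecting condition via $v\mapsto vG$. The only difference is that you supply an explicit argument for $\mathrm{rank}(G)=k$ (the spanning step), whereas the paper simply asserts it; otherwise the two proofs are the same.
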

\begin{proof}.
Let $\A=\{{\bf a_1},\ldots,{\bf a_n}\}\subseteq \FF_2^k$ be a $(k,s)$--set.
Let us represent $\A$ as an $n\times k$ matrix  $A$ where  the rows correspond 
to the vectors of $\A$,
 and denote $G=A^T$. Note that  $G\in \FF^{k\times n}_2$ and $rank(G)=k$. 
Let ${\bf v_1},\ldots,{\bf v_s}\in {\FF_2}^k$ be linearly independent vectors and let
${\bf u_1}={\bf v_1}G,\ldots, {\bf u_s}={\bf v_s} G$.
 Then  ${\bf u_1},\ldots,{\bf u_s}\in {\FF}^n_2$
are linearly independent as well. By the definition of a $(k,s)$--set, 
there exists ${\bf{a_i}}\in \A$ such 
that ${\bf (a,v_j)}=1$ for $j=1,\ldots,s$, that is all vectors 
${\bf u_1},\ldots,{\bf u_s}$ have a one in the $i$th coordinate.
This clearly means that the $[n,k]$ code with  generator matrix 
$G$ is an $s$--wise intersecting code.
Similarly we have the inverse implication.
 \end{proof} 

Recall (Proposition \ref{prop1}) that  if $\A\subseteq \FF_2^k$ is a $(k,s)$--set then $\A$  contains a solution
to every  (consistent) nonhomogeneous system of  $s$ independent equations, which in fact  means that $A$ meets every $(k-s)$--flat.
Thus,  the problem of construction of 
 $s$--wise intersecting $(n,k)$--codes (respectively $(k,s)$--sets) can be viewed as a covering problem.  

 {\it Problem 2} Determine the minimal size $n(k,s)$ of a set of vectors in  $\FF^k$, 
called  a transversal or a blocking set, that meets every $(k-s)$--dimensional flat.

{\it Remark 1} We note that in case $s=1$ we have a triviality and $n(k,1)=k$.
Another trivial case is $s=k$. In this case we clearly have $n(k,k)=2^k-1$.

Also it is not hard to observe  that $n(k,k-1)=2^k-2$ (see also Remark 3 below). The first open case is $s=2$.

{\it Remark 2} The notion of a $(k,s)$--set can be extended to arbitrary spaces $\FF^k_q$
in a natural way.
 However, notice that Proposition \ref{prop5} is not true for the nonbinary case. 
Consider an MDS $[n,k,d=n-k+1]_q$--code $\C$.
Such a code exists for all $1\leq k\leq n\leq q+1$ (see \cite{MS}). Observe  that for
 $d> \frac{s-1}{s}\cdot n$ (that
is $n> s(k-1)$)
we have an $s$--wise intersecting code, but  the columns of a generator matrix
 of $\C$ do not form
a $(k,s)$--set for $s\geq 2$.

 It is  worth to mention that the problem of finding  the minimal size  of a set
 of nonzero vectors in $\FF^k_q$
that meets all $(k-s)$--dimensional subspaces  is much easier.
This problem  was solved  by Bose and Burton \cite{BB}.

\begin{theo} \cite{BB}  \label{theo5}~Let $\A$ be a set of points of $\FF^k_q$ that  meets every 
$(k-s)$--space of $\FF^k_q$. Then $|\A|\geq (q^{r+1}-1)/(q-1)$, with equality if and only if
$\A$ consists of  the points of an $(r+1)$--subspace of $\FF^k_q$. 
\end{theo}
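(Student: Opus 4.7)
The plan is to translate into projective geometry: view $\A$ as a set of projective points in $PG(k-1,q)$ (one representative per $1$-dimensional linear subspace), so the hypothesis becomes that $\A$ meets every projective $(k-s-1)$-subspace. Writing $\theta_q(m) := (q^{m+1}-1)/(q-1)$ for the cardinality of a projective $m$-subspace, the target inequality becomes $|\A|\geq \theta_q(s)$, with the extremal case given by the full set of projective points of a single $(s+1)$-dimensional linear subspace (so the paper's $r$ corresponds to $s$).

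I would induct on $s$, with the ambient dimension $k$ arbitrary. The base case $s=0$ is immediate: $\A$ must meet $V=\FF_q^k$ itself, so $|\A|\geq 1 = \theta_q(0)$. For the inductive step, fix any $P\in\A$ and consider an arbitrary hyperplane $H$ of $V$ with $P\notin H$. Every $(k-s)$-dimensional linear subspace $W\subseteq H$ is a codim-$(s-1)$ subspace of the $(k-1)$-dimensional space $H$ and must be met by $\A\cap H$; the inductive hypothesis applied inside $H$ therefore gives $|\A\cap H|\geq \theta_q(s-1)$.

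Now average this bound over the $q^{k-1}$ hyperplanes of $V$ not containing $P$. Swapping the order of summation, each $Q\in\A\setminus\{P\}$ lies in exactly $\theta_q(k-2)-\theta_q(k-3)=q^{k-2}$ such hyperplanes (count hyperplanes through $Q$, then subtract those through both $Q$ and $P$), while $P$ itself contributes nothing. This yields the key inequality
\[
(|\A|-1)\,q^{k-2} \;\geq\; q^{k-1}\cdot\frac{q^s-1}{q-1},
\]
which rearranges directly to $|\A|\geq (q^{s+1}-1)/(q-1)=\theta_q(s)$, as desired.

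For the equality characterization, tightness in the display forces $|\A\cap H|=\theta_q(s-1)$ for \emph{every} hyperplane $H$ missing $P$, and by the equality case of the inductive hypothesis each such $\A\cap H$ is a projective $(s-1)$-subspace $T_H$ of $H$. The main obstacle is then to patch the various $T_H$ together — using that $T_H$ and $T_{H'}$ must agree on $H\cap H'$, and that the whole configuration must be closed under joining with $P$ — to conclude that $\A$ lies in (and so coincides with) the projective points of a single $(s+1)$-dimensional linear subspace of $V$ through $P$.
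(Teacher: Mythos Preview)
The paper does not supply a proof of this theorem; it is simply quoted from Bose and Burton \cite{BB} as a classical result, so there is no in-paper argument to compare against.

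On its own merits, your averaging argument for the inequality is correct and is in fact the standard Bose--Burton approach: fix a point $P\in\A$, apply the inductive hypothesis inside each hyperplane avoiding $P$, and double-count incidences between points of $\A\setminus\{P\}$ and such hyperplanes. The arithmetic $(|\A|-1)q^{k-2}\ge q^{k-1}\theta_q(s-1)$ rearranging to $|\A|\ge\theta_q(s)$ is clean, and you correctly identified that the paper's exponent $r$ should read $s$.

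The equality case, however, is only gestured at. Knowing that each $\A\cap H$ is a projective $(s-1)$-flat of $H$ is the right starting point, but the ``patching'' step is where the actual content lies, and your sentence does not carry it out. One standard way to finish: since the argument is symmetric in $P\in\A$, \emph{every} hyperplane of $V$ meets $\A$ in either $\theta_q(s-1)$ or $\theta_q(s)$ points; a two-way count of point--hyperplane incidences then forces the span of $\A$ to have dimension exactly $s+1$, and since $|\A|=\theta_q(s)$ equals the number of projective points in that span, $\A$ must be all of them. As written, your proposal leaves this as an exercise rather than a proof.
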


{\it Covering arrays}: A  $k\times N$ array with entries from an alphabet of size
 $q$ is called a $t$-covering array,
and denoted by CA$(N,k,t)_q$, if
the columns of each 
$t\times N$ subarray contain each $t$-tuple  at least once as a column. 
The problem is to minimize $N$ for which there exists a CA$(N,k,t)_q$.
Covering arrays were first introduced by Renyi \cite {Ren}. The case $t=2$ was  solved
by  Renyi \cite {Ren} (for even $k$) and  by Katona \cite{K} and 
Kleitman and Spencer \cite{KlS}
(for arbitrary $k$).
Covering arrays have applications in circuit testing, digital communication,
 network designs, etc.
Construction of optimal covering arrays has been the subject of a lot of research
(see a survey \cite{C}).

Let $G$ be a generator matrix of an $s$--wise intersecting $[n,k]$ code $\C$ and let $M\in \FF^{s\times k}$ be
a full rank matrix. Then in view of Proposition 5 (and by definition of an $(s,k)$--good set) the columns
of matrix $MG$ contain all nonzero $s$--tuples. This  in particular means that for every invertible matrix $L\in\FF^{k\times k}$
the matrix  $G^{\prime}=LG$ (a generator matrix of $\C$) together with the all zero column
is a covering array. 
 Thus, we have the following. 
\begin{prop} \label{prop6} 
An $[n,k]$ code  $\C$ is $s$--wise intersecting if and only if 
 every generator matrix of $\C$ (together with the all zero column) is an 
$s$--covering array.\\
 Equivalently, the columns  of an $s$--covering $k\times N$ array CA over binary alphabet
(considered as vectors in $\FF^k$) form an $(k,s)$--good set 
if and only if CA is invariant under every invertible 
transformation of  $\FF^k$.  
\end{prop}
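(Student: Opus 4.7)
The plan is to read both equivalences as direct translations of Propositions \ref{prop5} and \ref{prop1} into covering-array language, so the argument is essentially the unpacking of definitions in two steps.

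For the first biconditional, the forward direction begins by noting that if $\C$ is $s$-wise intersecting and $G$ is any generator matrix, then Proposition \ref{prop5} says the columns of $G$ form a $(k,s)$-set; adjoining the all-zero column then gives a $(k,s)$-good set $\A$ by Proposition \ref{prop1}. To verify the $s$-covering property at an arbitrary $s$-subset of rows $S=\{i_1,\ldots,i_s\}$ and any target tuple $(x_1,\ldots,x_s)\in\{0,1\}^s$, I would instantiate the $(k,s)$-good definition at the linearly independent unit vectors ${\bf e}_{i_1},\ldots,{\bf e}_{i_s}$; the column of $\A$ produced by this definition has exactly the prescribed entries in positions $S$.

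For the reverse direction, I would take any $s$ linearly independent codewords ${\bf c}_1,\ldots,{\bf c}_s$, extend them to a basis of $\C$, and let $G$ be the resulting generator matrix (with the ${\bf c}_j$ as its first $s$ rows). The covering-array hypothesis, applied to the first $s$ rows and to the all-ones tuple, supplies a coordinate at which ${\bf c}_1,\ldots,{\bf c}_s$ are simultaneously nonzero, i.e.\ $s$-wise intersection.

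The ``equivalently'' restatement rests on one simple linear-algebra observation: every full-rank $M\in\FF^{s\times k}$ factors as $M=PL$ with $P=(I_s\,|\,0)$ and $L\in\FF^{k\times k}$ invertible, obtained by extending the rows of $M$ to a basis of $\FF^k$. Then $MC=P(LC)$ reads off the first $s$ rows of $LC$, and as $M$ varies over all full-rank $s\times k$ matrices, the pair $(L,S)$ ranges over all invertibles $L$ together with all $s$-subsets $S$ of row indices. The $(k,s)$-good condition on the columns of $C$ asserts that every such $MC$ contains every binary $s$-tuple as a column, which by this factorization is exactly the statement that $LC$ is an $s$-covering array for every invertible $L$. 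I do not anticipate a real obstacle; the main points requiring care are that ``generator matrix'' genuinely ranges over all bases of $\C$ (which justifies the basis-extension step in the converse), and that the role of the all-zero column is precisely to pass from the $(k,s)$-set covering of Proposition \ref{prop5}, which handles only nonzero $s$-tuples, to the full $(k,s)$-good covering via Proposition \ref{prop1}.
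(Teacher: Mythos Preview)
Your proposal is correct and follows essentially the same route as the paper. The paper's argument (which appears in the paragraph immediately preceding the proposition) also starts from Proposition~\ref{prop5} to get that the columns of any generator matrix form a $(k,s)$--set, and then uses the full-rank-$M$ characterization to conclude that every $LG$ together with the zero column is an $s$--covering array; your factorization $M=PL$ is exactly the mechanism underlying that step. The paper leaves the converse direction and the second equivalence implicit, so your basis-extension argument for the reverse implication is a welcome completion rather than a departure.
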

Let us also mention another extensively studied related notion. 
 A  code $\C$  of length $n$ is called
 $(t,u)$--{\it separating}, if for every disjoint pair $(U,T)$ of
subsets of $\C$ with $|T|=t$ and $|U|=u$  the following holds: there exists 
a coordinate $i$ such 
that for any codeword $(c_1,\ldots,c_n)\in T$ and any codeword  $(c'_1,\ldots,c'_n)\in U$,
$c_i\neq c'_i$. \\
 Separating codes were
  studied  by many authors in connection with practical problems
in cryptography, computer science, and search theory.
The relationship between $s$--wise intersecting codes and separating codes is 
studied in \cite {CELS}.

\subsection{Some known results about intersecting codes}
In this subsection we present some known results on intersecting codes
which can be used for our problems. \\
 Given a vector ${\bf v}=(v_1,\ldots,v_n)\in \FF^n_q$, 
the set $I=\{i\in [n]: v_i\neq 0\}$ is called the support  of ${\bf v}$  and is
denoted  by   $supp(\bf v)$. Given a code $\C$ of length $n$ 
 and $I= \{i_1,\ldots,i_{|I|}\}$, denote by $\C(I)$ the restriction 
of the  code  on the coordinate set $I$, that is the code obtained by deletion of 
the coordinates $\bar I\triangleq\{1,\ldots,n\}\setminus I$.

\begin{lemm} \label{lemm1} Let $\C$ be an $s$-wise intersecting $[n,k]$ code and let $\bf v\in \C$
be a codeword with $wt({\bf v})=w$ and with
$supp{(\bf v)}=I$.  Then\\
(i) \cite{CELS} ~$\C(I)$ is an $[w,k]$-code.  
 If $\bf \{u_1,\ldots,u_{k-1},v\}$ is  a base of $\C(I)$ then the code  $\C^*(I)$
 generated by the vectors $\bf \{u_1,\ldots,u_{k-1}\}$
is an $(s-1)$--wise intersecting $[w,k-1]$  code.\\
(ii) $\C(\bar I)$ is an $(s-1)$--wise 
intersecting $[n-w,k-1]$ code.
\end{lemm}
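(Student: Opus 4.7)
The plan is to treat the two parts in order, in each case establishing the stated dimension before turning to the intersecting property.

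For part (i), I would begin by verifying that the puncturing map $\pi:\C\to\C(I)$, ${\bf c}\mapsto{\bf c}|_I$, is injective, so that $\dim\C(I)=k$. Any nonzero kernel element ${\bf x}$ would have $supp({\bf x})\subseteq\bar I$, disjoint from $supp({\bf v})=I$, so $\{{\bf x},{\bf v}\}$ would be two linearly independent codewords with no common nonzero coordinate; this contradicts the $2$-wise intersecting property, which the $s$-wise hypothesis implies whenever $k\geq s$, since any two independent codewords can be extended to a set of $s$ independent ones. To show $\C^*(I)$ is $(s-1)$-wise intersecting, I would take any $s-1$ independent ${\bf w}_1,\ldots,{\bf w}_{s-1}\in\C^*(I)$, lift them via $\pi^{-1}$ to codewords $\tilde{\bf w}_j\in\C$, and observe that $\{\tilde{\bf w}_1,\ldots,\tilde{\bf w}_{s-1},{\bf v}\}$ are $s$ linearly independent codewords of $\C$, since ${\bf v}$ completes a basis of $\C(I)$ outside $\C^*(I)$. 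The $s$-wise hypothesis then produces a common nonzero coordinate $i$, which must lie in $I$ because ${\bf v}$ vanishes on $\bar I$; restricting to $I$ gives the common nonzero coordinate for the ${\bf w}_j$'s.

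For part (ii), I would first show $\dim\C(\bar I)=k-1$, that is, the kernel of the puncturing $\C\to\C(\bar I)$ is precisely $\langle{\bf v}\rangle$. Working over $\FF_2$, suppose ${\bf x}\in\C$ has $supp({\bf x})\subseteq I$ with ${\bf x}\neq 0,{\bf v}$; then either $supp({\bf x})=I$ (which over $\FF_2$ forces ${\bf x}={\bf v}$, a contradiction) or $supp({\bf x})\subsetneq I$, in which case ${\bf x}$ and ${\bf v}+{\bf x}$ are two independent codewords with disjoint supports $supp({\bf x})$ and $I\setminus supp({\bf x})$, again contradicting $2$-wise intersecting.

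To prove $\C(\bar I)$ is $(s-1)$-wise intersecting, I would take independent codewords ${\bf w}_1,\ldots,{\bf w}_{s-1}\in\C(\bar I)$ and lift each to some ${\bf c}_j\in\C$. The naive attempt of appending ${\bf v}$ fails, because the common coordinate guaranteed by the $s$-wise hypothesis would be forced into $I$ rather than $\bar I$. The key trick is instead to append ${\bf c}_1+{\bf v}$: a short check shows that $\{{\bf c}_1,\ldots,{\bf c}_{s-1},{\bf c}_1+{\bf v}\}$ are $s$ linearly independent codewords of $\C$, so they share some nonzero coordinate $i^*$. At $i^*$, both $({\bf c}_1)_{i^*}$ and $({\bf c}_1+{\bf v})_{i^*}$ equal $1$ in $\FF_2$, forcing $({\bf v})_{i^*}=0$ and thus $i^*\in\bar I$; at this position the restrictions ${\bf w}_j=({\bf c}_j)|_{\bar I}$ are simultaneously nonzero, as required.

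The main obstacle is precisely this last step, namely finding an auxiliary $s$-th codeword so that the $s$-wise hypothesis produces a common coordinate in $\bar I$ rather than in $I$. The choice ${\bf c}_1+{\bf v}$ works because over $\FF_2$, at any coordinate where ${\bf v}$ is nonzero the two codewords ${\bf c}_1$ and ${\bf c}_1+{\bf v}$ take opposite values and therefore cannot both be nonzero there.
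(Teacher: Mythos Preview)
Your argument is correct. For part~(i) you proceed directly from the definition, which is exactly what the paper indicates. For part~(ii), however, you take a genuinely different route from the paper. The paper simply appeals to Proposition~\ref{prop6}, the covering-array characterization: for the $s$ independent codewords ${\bf c}_1,\ldots,{\bf c}_{s-1},{\bf v}$ that proposition immediately yields a column carrying the pattern $(1,\ldots,1,0)$, and such a column necessarily lies in $\bar I$. You instead stay entirely within the bare definition of $s$-wise intersecting by adjoining the auxiliary codeword ${\bf c}_1+{\bf v}$; the common~$1$ among ${\bf c}_1,\ldots,{\bf c}_{s-1},{\bf c}_1+{\bf v}$ is then forced into $\bar I$ because over $\FF_2$ the vectors ${\bf c}_1$ and ${\bf c}_1+{\bf v}$ differ exactly on $I$. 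Your version is self-contained and avoids invoking (or proving) the covering-array equivalence; the paper's route, once Proposition~\ref{prop6} is available, makes both (i) and (ii) uniform one-liners, since every nonzero $s$-pattern appears among any $s$ independent codewords and one simply reads off the pattern one needs.
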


The proof of (i) is easily derived from the definition of an $s$--wise intersecting code.
Note that both (i) and (ii) follow from Proposition \ref{prop6}
(the lemma was also observed in \cite{HT} in terms of $(r,s)$--sets). 

 Lemma 1 implies simple estimates for the minimum and maximum distances of intersecting codes.
It shows that $s$-wise intersecting codes have strong distance
 properties which means that in general
construction of  such optimal codes is a difficult problem.

In view of equivalence shown in Proposition \ref{prop5}, the next  results  can be used for construction of infinite families of 
$(r,s)$--sets
with positive rate.
\begin{theo} (Cohen--Zemor) \cite{CZ} \label{theo6} There is a constructive infinite sequence of $s$-wise
 intersecting binary codes
with rate arbitrary close to 
\begin{equation}
R=\Bigl(2^{1-s}-\frac{1}{2^{2s+1}-1}\Bigr)\frac{2s+1}{2^{2s}-1}=2^{2-3s}(s+o(s)).
\end{equation}
\end{theo}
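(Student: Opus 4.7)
The plan is to build the family by code concatenation. Fix an inner binary $s$-wise intersecting $[n_0,k_0]$ code $C_{\text{in}}$ of rate approaching $2^{1-s}-1/(2^{2s+1}-1)$. Such a code exists (a random binary code of length $n_0$ and dimension $k_0$ is $s$-wise intersecting with positive probability once $n_0$ exceeds roughly $k_0 s/\log_2(1/(1-2^{-s}))$, because the expected number of ``bad'' $s$-dimensional subspaces is then $o(1)$) and can be made explicit by derandomization or by a smaller-seed nested concatenation. Next choose an outer code $C_{\text{out}}$ over $\FF_{2^{k_0}}$ from a Garcia--Stichtenoth tower attaining the Tsfasman--Vl\u{a}du\c{t}--Zink bound, with relative distance $\delta_{\text{out}}>(2^{s}-2)/(2^{s}-1)$ and rate $R_{\text{out}}$ approaching $(2s+1)/(2^{2s}-1)$ after the parameter $k_0$ is tuned. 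The concatenation $C:=C_{\text{out}}\circ C_{\text{in}}$ is then a binary $[n_0 N,k_0 K]$ code of rate $R_{\text{in}} R_{\text{out}}$, matching (III.1).

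The verification that $C$ is $s$-wise intersecting proceeds as follows. Given $\FF_2$-independent codewords $v^{(1)},\ldots,v^{(s)}\in C$, their outer representatives $\tilde v^{(1)},\ldots,\tilde v^{(s)}\in C_{\text{out}}$ are $\FF_2$-independent in $\FF_{2^{k_0}}^N$ by $\FF_2$-linearity and injectivity of the inner encoder. Call an outer coordinate $j\in[N]$ \emph{bad} if some nonzero $\alpha\in\FF_2^s$ satisfies $\sum_i \alpha_i \tilde v^{(i)}_j=0$. For each fixed nonzero $\alpha$, the codeword $\sum_i\alpha_i\tilde v^{(i)}\in C_{\text{out}}$ is nonzero and therefore has at most $N-d_{\text{out}}$ zero positions; a union bound over the $2^s-1$ nonzero $\alpha$'s yields strictly fewer than $(2^s-1)(N-d_{\text{out}})<N$ bad coordinates, the inequality following from the lower bound on $\delta_{\text{out}}$. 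At any good $j$ the $s$ inner blocks are $\FF_2$-independent elements of $C_{\text{in}}$, hence by its intersecting property they share a coordinate where all equal $1$; this is a common $1$-coordinate of $v^{(1)},\ldots,v^{(s)}$ inside block $j$.

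The main obstacle is rate matching. Pushing the inner rate up to roughly $2^{1-s}$ is well beyond what a direct Gilbert--Varshamov calculation (which only gives $\Theta(2^{-s})$) provides, and in practice requires a second, nested concatenation in which $C_{\text{in}}$ is itself a concatenation of a tiny $s$-wise intersecting seed with an intermediate AG code. Simultaneously, $k_0$ must be tuned so that the TVZ correction term $1/(2^{k_0/2}-1)$ is of the correct order relative to $(2s+1)/(2^{2s}-1)$; otherwise the outer rate collapses. Once these two balancing acts are performed, multiplying the rates and expanding the resulting expression for large $s$ yields the asymptotic $R=2^{2-3s}(s+o(s))$.
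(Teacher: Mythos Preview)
Your concatenation framework is the right one, but you have interchanged the roles of the inner and outer codes, and this makes the parameters unattainable. In the Cohen--Z\'emor construction (as the paper summarizes immediately after the theorem), the \emph{inner} code is the explicit punctured dual of the $2$-error-correcting BCH code, with parameters $[2^{2s+1}-2,\,4s+2,\,2^{2s}-2^s-1]$; its rate is $(2s+1)/(2^{2s}-1)$, and it is $s$-wise intersecting by the Cohen--Lempel criterion (Lemma~\ref{lemm2}) because its weights are concentrated near $2^{2s}$. The \emph{outer} code is the algebraic-geometric code over $\FF_q$ with $q=2^{4s+2}$, taken with relative distance exceeding $1-2^{1-s}$ and hence (by TVZ) with rate approaching $2^{1-s}-1/(2^{2s+1}-1)$. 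The product of these two rates gives exactly the expression in (III.1).

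Your proposal instead asks for an inner $s$-wise intersecting binary code of rate close to $2^{1-s}$. No such code exists: the upper bounds recorded in Theorem~\ref{theo9} already force the rate of any $s$-wise intersecting code below $2^{1-s}$ for every $s\ge 2$ (e.g.\ $R_2\approx 0.28<1/2$), and even the nonconstructive existence rate of Theorem~\ref{theo8} is only $-\tfrac{1}{s}\log(1-2^{-s})\approx 2^{-s}/(s\ln 2)$. So the ``nested concatenation'' you suggest cannot rescue the inner rate; the target is above a hard ceiling, not merely beyond a Gilbert--Varshamov estimate. Once you swap the two factors back --- explicit BCH-dual inside, AG code outside over $\FF_{2^{4s+2}}$ --- the construction becomes genuinely explicit (no derandomization is needed), the outer-distance requirement relaxes to $\delta_{\mathrm{out}}>1-2^{1-s}$, and the rate product lands on (III.1).
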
 
The  result is  obtained by concatenating algebraic-geometric $[n,k,d]_q$ codes in Tsfasmann \cite{T}
satisfying $d>n(1-2^{1-s})$ with $q=2^{4s+2}$ and with a rate arbitrary close 
to $2^{1-s}-1/(\sqrt q-1)$,
with $s$--wise intersecting $[2^{2s+1}-2,4s+2,2^{2s}-2^s-1]$ code (the  punctured dual of the 2-error-correcting BCH code).

Another possible approach for constructing $s$--wise intersecting codes (and hence $(r,s)$--sets) is to use $\varepsilon$-{\it Biased Codes}.
A binary linear code $\C$ of length $n$ is called $\varepsilon$--biased if the weight of every non-zero codeword in $\C$ 
lies in the range 
$(1/2-\varepsilon)n\leq w\leq (1/2+\varepsilon)n$. Biased codes can be constructed 
using pseudo-random graphs 
known as expanders (expander  codes).

\begin{theo} (Alon et al.) \cite{ABNNR} \label{theo7} For any $\varepsilon>0$, there exists an explicitly
 specified family of constant-rate binary linear $\varepsilon$--biased codes.
\end{theo}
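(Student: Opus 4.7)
The strategy is to reduce the construction of $\varepsilon$--biased \emph{codes} to that of $\varepsilon$--biased \emph{sets} in $\FF^k$, and then invoke an explicit construction of such sets via Ramanujan expanders. Recall that a multiset $S=\{s_1,\ldots,s_n\}\subset \FF^k$ is $\varepsilon$--biased if $|\sum_{i=1}^n(-1)^{\langle y,s_i\rangle}|\leq \varepsilon n$ for every nonzero $y\in\FF^k$. Given such an $S$, form the $k\times n$ matrix $G$ whose columns are the $s_i$, and let $\C$ be the $[n,k]$ code generated by $G$. A direct count gives $wt(yG)=\frac{n}{2}-\frac{1}{2}\sum_{i=1}^n(-1)^{\langle y,s_i\rangle}$, so an $\varepsilon$--biased set of size $n$ yields a binary linear code of dimension $k$ in which every nonzero codeword has weight in $\bigl[(1/2-\varepsilon/2)n,\,(1/2+\varepsilon/2)n\bigr]$, i.e.\ an $(\varepsilon/2)$--biased code. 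Thus it suffices to produce explicit $\varepsilon$--biased sets in $\FF^k$ of size linear in $k$.

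For the second step I would invoke a standard explicit construction of small-bias sets, for instance that of Naor--Naor or Alon--Goldreich--H\aa stad--Peralta. These constructions feed a short seed through an explicit family of Ramanujan graphs (for example the Lubotzky--Phillips--Sarnak graphs of fixed degree $d$ with second-largest eigenvalue $\lambda\leq 2\sqrt{d-1}$) and use iterated applications of the spectral bound $\|Af/d\|_2\leq (\lambda/d)\|f\|_2$ on mean-zero functions to force the bias below a prescribed threshold. With the parameters tuned so that the bias falls below $\varepsilon$, the resulting set has size $O(k/\varepsilon^{c})$ for some absolute constant $c$. For each \emph{fixed} $\varepsilon>0$ this is $O_\varepsilon(k)$, so the associated code has rate $k/n=\Omega_\varepsilon(1)$ -- constant in $k$ -- as required.

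The main obstacle, and what distinguishes this theorem from easier weaker statements, is ensuring the bound $|S|=O_\varepsilon(k)$: a naive construction using, e.g., dual BCH codes gives $|S|$ polynomial in $k$ of degree growing with $1/\varepsilon$, whose rate vanishes as $k\to\infty$. The linear-in-$k$ dependence is the nontrivial content of the ABNNR-style construction and rests squarely on the explicit spectral gap of the LPS graphs being bounded away from $0$ uniformly in the number of vertices. Linearity of the resulting code over $\FF_2$ comes for free from the $\FF_2$--linearity of the encoding $y\mapsto (\langle y,s_i\rangle)_{i=1}^n$, so no further verification is needed on that front.
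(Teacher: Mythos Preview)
The paper does not prove this theorem at all; it is quoted from \cite{ABNNR} as a known result and used only as a black box (together with Lemma~\ref{lemm2}) to deduce Corollary~\ref{coro3}. There is therefore no proof in the paper to compare your proposal against.

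On the substance of your sketch: the reduction from $\varepsilon$--biased codes to $\varepsilon$--biased sets via the generator-matrix columns is correct and standard. But be careful about which explicit construction actually delivers $|S|=O_\varepsilon(k)$. The AGHP construction you cite yields sets of size $O(k^2/\varepsilon^2)$, which is quadratic in $k$ and hence gives vanishing rate; it does not suffice. More to the point, the ABNNR paper itself does not build a small-bias set and then read off a code. It works in the opposite direction: start from any explicit constant-rate linear code with constant relative distance (say a Justesen code), place its $n$ coordinates on the vertices of a constant-degree Ramanujan graph, and replace each coordinate by the collection of its neighbours' values. The expander mixing lemma then forces every nonzero codeword of the new code to have relative weight in $[1/2-\varepsilon,\,1/2+\varepsilon]$, with $\varepsilon$ governed by the spectral gap and the distance of the inner code, while the rate drops only by the constant degree factor. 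So your outline captures the right ingredient (an explicit spectral gap uniform in the graph size) but misidentifies the mechanism: ABNNR is a distance-amplification procedure applied to an existing code, not a direct small-bias-set construction in the Naor--Naor style.
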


\begin{lemm} (Cohen--Lempel) \cite{CL} \label{lemm2}
Let $d$ and $D$ denote respectively the minimum and the maximum distance of
 a binary linear code  $\C$. Then $\C$ is $s$--wise intersecting if $d>D(1-2^{1-s})$. 
\end{lemm}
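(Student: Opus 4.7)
The plan is to show that any $s$ linearly independent codewords $\bf v_1,\ldots,v_s \in \C$ share a coordinate in which they are all equal to $1$, using an inclusion-exclusion identity (equivalently, a Fourier expansion over $\FF^s$) that expresses this count in terms of the weights of the $2^s-1$ nonzero codewords in the span of $\bf v_1,\ldots,v_s$. Once the upper bound $D$ and lower bound $d$ on these weights are plugged in, the hypothesis $d > D(1-2^{1-s})$ will force the count to be strictly positive.

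First I would introduce, for each coordinate $j\in[n]$, the vector $x_j = (v_1(j),\ldots,v_s(j))\in\FF^s$, and let $N = |\{j : x_j = (1,\ldots,1)\}|$ be the quantity we wish to bound from below. Using $v_i(j) = (1-(-1)^{v_i(j)})/2$, I would write
\begin{equation}
N = \sum_{j=1}^{n}\prod_{i=1}^{s} v_i(j) = 2^{-s}\sum_{j=1}^n\sum_{T\subseteq[s]}(-1)^{|T|}(-1)^{(\sum_{i\in T}v_i)(j)}.
\end{equation}
Interchanging the sums and using that $\sum_j (-1)^{w(j)} = n - 2\,wt({\bf w})$ for any binary vector $\bf w$ of length $n$ (and that the $T=\emptyset$ term contributes $n$), the term $n\sum_T(-1)^{|T|}$ vanishes and one is left with
\begin{equation}
N = 2^{1-s}\sum_{\emptyset\neq T\subseteq[s]}(-1)^{|T|+1}\,wt({\bf w}_T),\qquad {\bf w}_T := \sum_{i\in T}{\bf v}_i.
\end{equation}

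Next I would use that ${\bf v_1},\ldots,{\bf v_s}$ are linearly independent, so every ${\bf w}_T$ with $T\neq\emptyset$ is a \emph{nonzero} codeword of $\C$ and therefore satisfies $d\leq wt({\bf w}_T)\leq D$. There are $2^{s-1}$ subsets $T\subseteq[s]$ of odd cardinality (contributing positively) and $2^{s-1}-1$ nonempty subsets of even cardinality (contributing negatively). Bounding the positive terms below by $d$ and the negative terms above by $D$ gives
\begin{equation}
N \geq 2^{1-s}\bigl[2^{s-1}d - (2^{s-1}-1)D\bigr] = d - (1-2^{1-s})\,D,
\end{equation}
which is strictly positive under the hypothesis $d > D(1-2^{1-s})$. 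Since $N$ is a nonnegative integer, $N\geq 1$, so there is a coordinate on which all of ${\bf v_1},\ldots,{\bf v_s}$ equal $1$, proving that $\C$ is $s$-wise intersecting.

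The only real obstacle is getting the inclusion-exclusion identity in display (2) correctly; the rest is a direct estimate. An alternative packaging of the same argument would be to apply inclusion-exclusion on the complements of the supports $I_i = \text{supp}({\bf v}_i)$, writing $N = |I_1\cap\cdots\cap I_s| = n - |I_1^c\cup\cdots\cup I_s^c|$ and expanding; this leads to the same linear combination of weights of the ${\bf w}_T$, but the Fourier/product form above is cleaner for tracking signs.
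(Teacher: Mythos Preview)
Your proof is correct. The inclusion--exclusion identity in your display (2) checks out: expanding $\prod_i(1-(-1)^{v_i(j)})$ and using $\sum_{T\subseteq[s]}(-1)^{|T|}=0$ and $wt({\bf w}_\emptyset)=0$ gives precisely $N=2^{1-s}\sum_{\emptyset\ne T}(-1)^{|T|+1}wt({\bf w}_T)$; the subsequent bounding with $2^{s-1}$ odd subsets and $2^{s-1}-1$ nonempty even subsets is straightforward and yields $N\ge d-(1-2^{1-s})D>0$ under the hypothesis.

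As for comparison: the paper does not supply its own proof of this lemma but merely cites Cohen--Lempel \cite{CL}, so there is nothing in the text to compare against. Your argument is in fact the standard one and essentially what Cohen and Lempel do; they phrase it via the observation that the number of coordinates where all $s$ vectors equal $1$ can be written as a $\pm$ combination of weights of the nonzero elements of the span, which is exactly your Fourier expansion. The alternative packaging you mention at the end (inclusion--exclusion on the complements of the supports) is indeed equivalent and is the more classical way to state it, but either version is fine.
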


The next statement  follows directly from Lemma 2.

\begin {coro} \label{coro3}
 An $\varepsilon$--biased linear code is $s$--wise intersecting if 
$\varepsilon<{1}/(2^{s+1}-2)$.
\end {coro}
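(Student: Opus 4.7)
The plan is to apply Lemma \ref{lemm2} directly, using the definition of an $\varepsilon$--biased code to control the ratio $d/D$.

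First I would record what $\varepsilon$--biasedness gives us. By definition, every nonzero codeword of $\C$ has weight in the interval $[(1/2-\varepsilon)n,(1/2+\varepsilon)n]$, so the minimum distance satisfies $d \geq (1/2-\varepsilon)n$ and the maximum distance satisfies $D \leq (1/2+\varepsilon)n$. By Lemma \ref{lemm2}, to conclude that $\C$ is $s$--wise intersecting it suffices to verify $d > D(1-2^{1-s})$, and in view of the two inequalities above it is enough to check the stronger condition
\begin{equation*}
\Bigl(\tfrac{1}{2}-\varepsilon\Bigr)n \;>\; \Bigl(\tfrac{1}{2}+\varepsilon\Bigr)(1-2^{1-s})\,n.
\end{equation*}

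Next I would dispose of the factor $n$ and expand the right-hand side, obtaining
\begin{equation*}
\tfrac{1}{2}-\varepsilon \;>\; \tfrac{1}{2}-2^{-s}+\varepsilon-\varepsilon\cdot 2^{1-s}.
\end{equation*}
Collecting the $\varepsilon$--terms on one side gives $\varepsilon(2-2^{1-s}) < 2^{-s}$, which upon dividing by $2-2^{1-s}>0$ reduces to
\begin{equation*}
\varepsilon \;<\; \frac{2^{-s}}{\,2-2^{1-s}\,} \;=\; \frac{1}{2^{s+1}-2},
\end{equation*}
exactly the hypothesis of the corollary. Hence the claim follows.

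There is no real obstacle here — the entire argument is a two-line substitution into Lemma \ref{lemm2} followed by an elementary rearrangement. The only thing to be careful about is the sign of $2-2^{1-s}$, which is positive for all $s\geq 2$ (the only interesting range, since for $s=1$ the intersecting condition is vacuous), so the final division preserves the direction of the inequality.
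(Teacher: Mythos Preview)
Your proof is correct and matches the paper's approach exactly: the paper simply states that the corollary ``follows directly from Lemma~\ref{lemm2},'' and your argument is precisely that direct verification, substituting the bounds $d\ge(\tfrac12-\varepsilon)n$ and $D\le(\tfrac12+\varepsilon)n$ into the inequality $d>D(1-2^{1-s})$ and solving for~$\varepsilon$.
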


The following nonconstructive lower  bound for the rate of an $s$--wise intersecting
$[n,k]$ code is due to Cohen and Zemor.

\begin{theo} \cite{CZ} \label{theo8}
 For any given rate $R<R(s)$
\begin{equation}
 R(s)=1-\frac{1}{s}\log(2^s-1)
\end{equation}
and $n\rightarrow \infty$ there exists an $s$--wise intersecting $[n,k]$ code of rate $R$.
\end{theo}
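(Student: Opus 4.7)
The plan is to run a standard random coding argument, essentially the same one underlying the probabilistic upper bound (I.6) for $G_1(k,s)$ in Theorem~\ref{theo4}. Pick the generator matrix $G\in\FF^{k\times n}$ uniformly at random and show that, whenever $R=k/n<R(s)$, with positive probability $G$ has rank $k$ and its code is $s$--wise intersecting. The key step is the single-tuple calculation: for fixed linearly independent $v_1,\ldots,v_s\in\FF^k$, let $c_i=v_iG$ and consider coordinate $j\in[n]$. The $s$-vector $((c_1)_j,\ldots,(c_s)_j)=(v_1g_j,\ldots,v_sg_j)$ is the image of the uniform independent column $g_j\in\FF^k$ under the linear map $x\mapsto(v_1x,\ldots,v_sx)$; linear independence of the $v_i$ makes this map a surjection onto $\FF^s$ with equal-sized fibres, so the image is uniform on $\FF^s$. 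Hence coordinate $j$ is a common-ones coordinate with probability $2^{-s}$, and by column-independence the probability that \emph{no} coordinate is a common-ones coordinate equals $(1-2^{-s})^n$.

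I then union-bound over all unordered $s$-subsets of linearly independent vectors in $\FF^k$, of which there are at most $2^{sk}/s!$. The expected number of ``bad'' $s$-subsets is at most $\frac{2^{sk}}{s!}(1-2^{-s})^n$, and this is $o(1)$ exactly when
\[
\frac{k}{n}<-\frac{1}{s}\log(1-2^{-s})=\frac{1}{s}\log\frac{2^s}{2^s-1}=1-\frac{1}{s}\log(2^s-1)=R(s).
\]
A separate elementary estimate gives $P(\mathrm{rank}(G)<k)=O(2^{k-n})=o(1)$ as well, so for any fixed $R<R(s)$ and all sufficiently large $n$ with $k=\lfloor Rn\rfloor$ a positive fraction of matrices $G$ generate a full-rank $[n,k]$ code that is $s$--wise intersecting and has rate arbitrarily close to $R$.

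There is essentially no technical obstacle; the only things to verify carefully are the accurate enumeration of linearly independent $s$-subsets and that the sub-exponential corrections coming from $1/s!$ and the rank-deficiency bound do not affect the asymptotic rate threshold. It is also worth noting that, in view of Proposition~\ref{prop5}, this theorem is essentially a restatement of the upper bound (I.6) for $G_1(k,s)$ in Theorem~\ref{theo4}, since the two proofs share the same random coding core.
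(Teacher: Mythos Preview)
The paper does not supply its own proof of Theorem~\ref{theo8}; it is quoted as a result of Cohen and Z\'emor~\cite{CZ}, so there is no in-paper argument to compare against directly. Your random-coding proof is correct and is indeed the standard argument: the key identity $-\tfrac{1}{s}\log(1-2^{-s})=1-\tfrac{1}{s}\log(2^s-1)=R(s)$ is right, the union bound over at most $2^{sk}/s!$ independent $s$-tuples is legitimate, and the rank-deficiency term $O(2^{k-n})$ is harmless. Your observation that, via Proposition~\ref{prop5}, this is the asymptotic form of the upper bound in Theorem~\ref{theo4} is exactly the point.

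For context, the paper does give its own non-asymptotic sharpening in Theorem~\ref{theo11} and Corollaries~\ref{coro4}--\ref{coro5}, obtained not by random coding but by directly counting ``bad'' $N$-subsets of $\FF^k\setminus\{\mathbf 0\}$ (those missing some $(k-s)$-flat) and comparing to $\binom{2^k-1}{N}$. That counting approach yields the explicit finite-$k$ bound $G_1(k,s)<\frac{(k-s+1)s+2}{-\log(1-2^{-s})}$, which in rate form reads $R>\frac{k}{k-s+2}R(s)$ and hence recovers Theorem~\ref{theo8} in the limit. So your argument and the paper's later argument reach the same asymptotic threshold by two closely related but formally distinct probabilistic/counting routes; yours is the more classical presentation, while the paper's gives a cleaner closed-form finite bound.
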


Using recursively the upper bound  due to McEliece-Rodemich-Rumsey-Welch \cite{MS} together
with Lemma 1 (i) one can get upper bounds for the rate of $s$--wise intersecting codes.
\begin{theo} (Cohen et al.) \cite{CELS} \label{theo9}
 The asymptotic rate of the largest $s$--wise intersecting code
is at most $R_s$, with $R_2\approx 0.28,~ R_3\approx 0.108, R_4\approx0.046, R_5\approx 0.021, R_6\approx 0.0099$. 
\end{theo}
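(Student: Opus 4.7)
I would prove this by induction on $s$, combining Lemma \ref{lemm1}(i) with the McEliece--Rodemich--Rumsey--Welch (MRRW) upper bound for binary codes, exactly as the paragraph preceding the statement suggests. Let $\rho(s)$ denote the asymptotic rate of the largest $s$--wise intersecting binary code, so that the claim is $\rho(s)\le R_s$ for the numerical values $R_s$ listed. Let $R_{\mathrm{MRRW}}(\delta)$ denote the MRRW upper bound on the asymptotic rate of a binary code of relative distance $\delta$. The plan is to establish the recursive inequality
\begin{equation}
\rho(s)\le R_{\mathrm{MRRW}}\!\left(\frac{\rho(s)}{\rho(s-1)}\right),
\label{eq:recursion-pr}
\end{equation}
so that $R_s$ is obtained numerically by iterating this from the trivial base case $\rho(1)=1$ (every code is vacuously $1$--wise intersecting).

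\textbf{Key inductive step.} Fix $s\ge 2$ and let $\C$ be an $s$--wise intersecting $[n,k]$ code of rate $R=k/n$ and minimum distance $d$. Choose a minimum weight codeword ${\bf v}\in\C$ and let $I=supp({\bf v})$, so $|I|=d$. By Lemma \ref{lemm1}(i), extending ${\bf v}$ to a basis of $\C(I)$ and removing ${\bf v}$ yields an $(s-1)$--wise intersecting $[d,k-1]$ code. The inductive hypothesis applied to this code gives
\begin{equation}
\frac{k-1}{d}\le \rho(s-1)+o(1)\quad\text{as }d\to\infty,
\end{equation}
hence $d\ge (k-1)/\rho(s-1)$ and consequently the relative distance satisfies
\begin{equation}
\delta=\frac{d}{n}\ge \frac{R}{\rho(s-1)}-o(1).
\end{equation}
Now the MRRW upper bound yields $R\le R_{\mathrm{MRRW}}(\delta)$, and since $R_{\mathrm{MRRW}}$ is a decreasing function of $\delta$, we obtain $R\le R_{\mathrm{MRRW}}(R/\rho(s-1))+o(1)$. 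Taking the supremum over all admissible codes and letting $n\to\infty$ yields \eqref{eq:recursion-pr}.

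\textbf{Numerical values.} Solving \eqref{eq:recursion-pr} with equality iteratively starting from $\rho(1)=1$: the case $s=2$ reduces to finding the unique fixed point of $R=R_{\mathrm{MRRW}}(R)$, which numerically gives $R_2\approx 0.28$; substituting this into \eqref{eq:recursion-pr} for $s=3$ yields $R_3\approx 0.108$; and so on, reproducing the values in the statement.

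\textbf{Main obstacle.} The delicate point is the asymptotic bookkeeping in the induction: the ``smaller'' code produced by Lemma \ref{lemm1}(i) has length $d$, which in principle could be much smaller than $n$, so one needs the induction hypothesis to apply to codes whose length is not controlled from below in terms of $n$. This is handled by working with the asymptotic supremum $\rho(s-1)=\limsup_{n\to\infty}\max\{k/n\}$, which is monotone in $n$, together with the standard reduction that it suffices to bound $\rho(s)$ against a sequence of codes with $n\to\infty$ and $d\to\infty$ (codes with bounded $d$ contribute vanishing rate). A minor additional point is choosing the minimum weight codeword ${\bf v}$ in Lemma \ref{lemm1}(i) to make the ratio $(k-1)/d$ as large as possible, so that the induction is tight; using a heavier codeword would only weaken the bound on $\delta$.
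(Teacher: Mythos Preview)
Your proposal is correct and follows exactly the approach the paper indicates (and attributes to \cite{CELS}): recursively applying Lemma~\ref{lemm1}(i) to pass from an $s$--wise intersecting $[n,k]$ code to an $(s-1)$--wise intersecting $[d,k-1]$ code, and then bounding the rate via the MRRW bound on the relative distance $\delta\approx R/\rho(s-1)$. The paper itself does not give more detail than its one-sentence pointer, so your recursion $\rho(s)\le R_{\mathrm{MRRW}}(\rho(s)/\rho(s-1))$ with base case $\rho(1)=1$ is precisely the intended argument.
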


For the case $s=2$, the best known bounds 
 on the minimal length $n(k,2)$ of an $[n,k]$ intersecting code are as follows
\begin{equation}
c_1(1+o(1))k< n(k,2)< c_2k-2,
\end {equation}
where $c_1=3.53\ldots, c_2=\frac{2}{2-\log3}$.

The lower bound is obtained by Katona and  Srivastava \cite{KS}. 
The upper bound is due 
to Koml\'{o}s (see \cite{KS}, \cite{M},  \cite{CL}).
Note that  the upper bound in Theorem 4  for $s=2$ 
gives 
$G_1(k,2)=n(k,2)\leq \frac{2k-1}{2-\log 3}<\frac{2}{2-\log 3}\cdot k-2.$

\section{Improving  bounds for $G(k,s)$ and $F(k,s)$}

In this section we derive new bounds for $G_1(k,s)$ and $F(k,s)$. We first derive a lower bound for $G_1(k,s)$.
Recall that we have trivial cases $G_1(k,1)=n(k,1)=k$ and  $G_1(k,k)=n(k,k)=2^s-1.$ 
\begin{theo} \label{theo10}For $2\leq s\leq k-1$ we have
\begin{equation}
G_1(k,s) \geq 3\cdot2^{s-2}(k-s)+5\cdot2^{s-2}-2.
\end{equation}
\end{theo}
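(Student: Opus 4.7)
The plan is to argue by induction on $s$, leveraging the identity $T(k,s)=2\,T(k-1,s-1)+2$, where $T(k,s):=3\cdot 2^{s-2}(k-s)+5\cdot 2^{s-2}-2$ is the claimed lower bound. Together with a base case at $s=2$, this reduces the problem to establishing the recurrence
$$G_1(k,s) \;\geq\; 2\,G_1(k-1,s-1)+2 \qquad (3\leq s\leq k-1).$$

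For this inductive step I would use an averaging argument over affine hyperplanes. Let $\A\subseteq\FF^k$ be a $(k,s)$--set; by Proposition~\ref{prop1} we may assume $0\notin\A$. For each nonzero $v\in\FF^k$ put $H_v:=\{a\in\FF^k:(a,v)=1\}$. The key geometric step is $|\A\cap H_v| \geq G(k-1,s-1)=G_1(k-1,s-1)+1$: fix $a_0\in\A\cap H_v$; any $(k-s)$-plane $P\subseteq v^\perp\cong\FF^{k-1}$ translates to the $(k-s)$-plane $P+a_0\subseteq H_v$, which avoids $0$ and is therefore a $(k-s)$-flat of $\FF^k$, met by $\A$ by the characterization of $(k,s)$--sets as sets meeting every $(k-s)$-flat (noted after Proposition~\ref{prop4}). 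Hence $\A\cap H_v-a_0$ meets every $(k-s)$-plane of $v^\perp$, making it $(k-1,s-1)$-good and of cardinality at least $G(k-1,s-1)$. Summing over $v\neq 0$ and using that each nonzero $a\in\A$ lies in exactly $2^{k-1}$ such hyperplanes,
$$2^{k-1}|\A| \;=\; \sum_{v\neq 0}|\A\cap H_v| \;\geq\; (2^k-1)\bigl(G_1(k-1,s-1)+1\bigr).$$
Since $G_1(k-1,s-1)+1<2^{k-1}$ whenever $s\leq k-1$ (the extremal case $s-1=k-2$ gives $T(k-1,k-2)+1=2^{k-1}-1$), integrality of $|\A|$ sharpens this to $|\A|\geq 2\,G_1(k-1,s-1)+2$, and the inductive hypothesis delivers $|\A|\geq T(k,s)$.

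The hard part is the base case $s=2$, where one must show $G_1(k,2)\geq 3k-3$: the averaging above only yields $|\A|\geq 2k$, because $G(k-1,1)=k$ is itself only linear in $k$. To close this gap I would use Proposition~\ref{prop5} to identify $\A$ with the set of columns of a generator matrix of a 2-wise intersecting binary $[n,k]$ code $\C$, prove the distance bound $d(\C)\geq k$ (any nonzero codeword must share a $1$ with a minimum-weight codeword $c_1$, hence cannot be supported in $\overline{I_{c_1}}$, which forces the restriction $\C\to\C(I_{c_1})$ to be injective and so $|I_{c_1}|\geq k$), and then invoke the Griesmer bound. For most $k$, Griesmer with $d\geq k$ already yields $n\geq 3k-3$; in the remaining small cases one shows that the Griesmer-extremal codes with $d=k$ (for instance $\mathrm{RM}(1,3)$ at $k=4$) contain complementary pairs of codewords and so fail 2-wise intersectingness, forcing the extra coordinate.
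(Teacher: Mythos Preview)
Your averaging argument for the recurrence $G_1(k,s)\ge 2\,G_1(k-1,s-1)+2$ is correct and is a genuinely different route from the paper's. The paper derives the same recurrence via Lemma~\ref{lemm1} (restricting an $s$--wise intersecting code to, and away from, the support of an extremal-weight word) together with the observation that $D>d$ whenever $n<2^k-1$; your hyperplane double count is more direct and sidesteps the min/max weight bookkeeping. The integrality sharpening is fine: for $3\le s\le k-1$ one has $G_1(k-1,s-1)\le G_1(k-1,k-2)=2^{k-1}-2$, so $(G_1(k-1,s-1)+1)/2^{k-1}<1$ and the ceiling takes you from $2M-M/2^{k-1}$ to $2M$.

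The base case $G_1(k,2)\ge 3k-3$, however, has a real gap. Griesmer with $d\ge k$ does \emph{not} yield $n\ge 3k-3$ except sporadically: for $k=2^m$ one computes $\sum_{i=0}^{k-1}\lceil k/2^i\rceil=(2k-1)+(k-1-m)=3k-2-m$, which falls short of $3k-3$ by $m-1$; e.g.\ $k=8$ gives $19$ versus the needed $21$. The deficit grows like $\log_2 k$, so the ``remaining small cases'' are in fact the generic case, and excluding a single Griesmer-extremal code cannot close a logarithmic gap. The paper's base case supplements $d\ge k$ with an \emph{upper} bound on the maximum weight: Lemma~\ref{lemm1}(ii) applied to a maximum-weight word gives $D\le n-(k-1)$, while a Singleton-type estimate on the restricted code $\C^{*}(I)$ yields $D\ge 2d-2d'\ge 2k-2$; combining, $n\ge D+k-1\ge 3k-3$. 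You could graft this argument onto your scheme, but note that its ingredients are precisely those of Lemma~\ref{lemm1}, so at $s=2$ the paper's restriction approach re-enters anyway.
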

\begin{proof} To prove this bound we need the following consequence of Lemma 1.
\begin{lemm} \label{lemm3}
 For an $s$--wise intersecting $[n,k,d]$ code $\C$ with maximum distance $D$ we have
\begin{equation} 
n\geq 2 \cdot n(k-1,s-1)+D-d+1.
\end{equation}
\end{lemm}
\begin{proof} Let ${\bf v}$ be a codeword of minimal weight $d$,  
with the support set $I$, that is $wt({\bf v})=|I|=d$, and 
let $G$ be a generator matrix of  $\C(I).$
  We may assume that all rows of $G$ except for the first one have 
a zero in the first coordinate. Hence by Lemma 1(i) the 
  code $\C^*(I)$ (defined in Lemma 1)  has  support size $d-1$, that is $d\geq n(k-1,s-1)+1.$
Furthermore, Lemma 1(ii) implies  that $ D\leq n-n(k-1,s-1)$, which together with
the previous inequality gives the result. 
 \end{proof}

Recall now that for $s<k$ we have $n(k,s)<2^k-1$. Note then that $D>d$.
 This follows from the simple observation
that there is no a constant weight $[n,k,d]$ code with $n<2^k-1$. 
Then  Lemma \ref{lemm3} in particular implies the inequality 
$n(k,s)\geq 2n(k-1,s-1)+2$
(the latter also follows from the fact that in case $n(k,s)<2^k-1$ we have $n\geq 2d$).
Since $\C^*(I)$ is an $[d,k,d^\prime]$ code, there is 
a codeword ${\bf u}\in \C$ of weight at most $d^\prime$ in the support set $I$ of $\bf v$.
Observe that this  implies $2d-2d^\prime\leq D\leq n-n(k-1,s-1)$ and
hence $n\geq n(k-1,s-1)+2d-2d^\prime$,
where $d^\prime$ is the minimum weight of $\C(I)$. 
Note that $d^\prime \leq d-k+1$ and thus   $n\geq n(k-1,s-1)+2k-2$.
This in particular for $s=2$ (together with $n(k-1,1)=k-1$) implies that
 $n(k,2)\geq 3k-3$.
 We have now the relation
$$G_1(k,s)\geq 2G_1(k-1,s-1)+D-d+1$$
\begin{equation}\geq 2G_1(k-1,s-1)+2
\end{equation}
 with  $G_1(k,2)=n(k,2)\geq 3k-3$.
Using induction on $s\geq 2$ we get  the required result.
\end{proof}
Notice that the right hand side
of (IV.1) is greater than the lower bound $2^{s-1}(k-s+2)-1$  in (I.6) $(k=r)$
by $2^{s-2}(k-s+1)-1$. Note also that this lower was obtained (in \cite{HT})  using 
the relation $G_1(k,s)\geq 2G_1(k-1,s-1)+1$ (compare with Lemma 3, resp. with (IV.3)).

{\it Remark 3}: 
The  bound  (IV.1) is tight for $s=k-1$. Indeed, we have
 $G(k,k-1)\geq 3\cdot 2^{k-3}+5\cdot 2^{k-3}-2=2^k-2.$
On the other hand any set of $2^k-2$ nonzero vectors is a $(k,k-1)$--set. 
The latter $(k-1)$--wise intersecting  $[2^k-2,k]$ code is a 
punctured simplex code. 
\begin{theo} \label{theo11} For $2\leq s< k$ we have \\
$G_1(k,s)\leq$ 
\begin{equation} \min_{N\in\NN} \Bigl\{N: \prod_{j=1}^{N}\Bigl(1-\frac{2^{k-s}}{2^k-j}\Bigr)
 (2^s-1)\bey ks<1\Bigr\}.
 \end{equation}
\end{theo}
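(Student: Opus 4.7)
The plan is to prove this by a probabilistic (deletion-free) argument using uniform sampling without replacement, exactly as the form of the bound suggests. The key simplification, already established in the paper just above Corollary \ref{coro2}, is the geometric reformulation: a set $\A \subset \FF^k$ is a $(k,s)$--set if and only if $\A$ meets every $(k-s)$--flat. So I only need to exhibit a subset of $\FF^k$ of size $N$ that intersects every $(k-s)$--flat.

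First I would count. A $(k-s)$--flat has $2^{k-s}$ points and does not pass through the origin, and a standard count (using the fact recalled in Section I that there are $2^{m-i}\bey{m}{i}$ $i$--planes in $\FF^m$) gives that the total number of $(k-s)$--flats in $\FF^k$ is
\begin{equation*}
(2^s - 1)\bey{k}{s}.
\end{equation*}
Next, I would fix $N$ and choose a multiset $\A = \{\mathbf a_1,\dots,\mathbf a_N\}$ uniformly at random without replacement from the $2^k - 1$ nonzero vectors of $\FF^k$. For a fixed $(k-s)$--flat $F$, after the first $j-1$ picks have all avoided $F$, the pool of remaining vectors has size $2^k - j$, of which exactly $2^{k-s}$ lie in $F$; hence
\begin{equation*}
\Pr[\A \cap F = \emptyset] \;=\; \prod_{j=1}^{N}\Bigl(1 - \frac{2^{k-s}}{2^k - j}\Bigr).
\end{equation*}
The union bound over the $(2^s-1)\bey{k}{s}$ flats gives that the probability some flat is missed is at most the product in the theorem statement. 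If $N$ is chosen so that this quantity is strictly less than $1$, then a realisation of $\A$ meeting every $(k-s)$--flat must exist, whence $G_1(k,s) \leq N$.

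There is no real obstacle; the only point requiring a moment of care is the denominator $2^k - j$ (not $2^k - j + 1$), which comes from sampling from $\FF^k \setminus \{0\}$ rather than from all of $\FF^k$. Since a $(k-s)$--flat misses the origin, the number of forbidden points is unchanged and the conditional probabilities telescope exactly into the stated product. Compared with the earlier Theorem \ref{theo4}, two things are doing the improvement: (i) sampling without replacement tightens each factor, and (ii) counting $(k-s)$--flats via $(2^s-1)\bey{k}{s}$ is sharper than counting ordered $s$--tuples of linearly independent vectors divided by $s!$. Both gains are built into the statement, so once the geometric reformulation and the flat count are in hand, the proof is a one-line union bound.
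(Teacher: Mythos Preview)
Your proposal is correct and is essentially the same argument as the paper's: the paper phrases it as a counting comparison (the number of $N$--subsets of $\FF^k\setminus\{\bf 0\}$ missing a fixed $(k-s)$--flat is $\binom{2^k-1-2^{k-s}}{N}$, and a union bound over the $(2^s-1)\bey{k}{s}$ flats must fall below $\binom{2^k-1}{N}$), while you phrase it probabilistically via sampling without replacement; the ratio $\binom{2^k-1-2^{k-s}}{N}/\binom{2^k-1}{N}$ is exactly your product $\prod_{j=1}^{N}(1-2^{k-s}/(2^k-j))$. The only cosmetic slip is calling $\A$ a ``multiset'' while sampling without replacement---just say ``subset''.
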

\begin{proof} Our problem is to find a blocking set of (minimum) size $N$ with respect to 
the $(k-s)$-dimensional flats in $\FF^k_2$. Let
$U$ be a $(k-s)$--flat and let $B=\FF^k_2\setminus U$.
The subset $B$ with $|B|=2^k-1-2^{k-s}$ does not contain a blocking set.
Thus, for every fixed $U$ there are
$\binom {2^k-2^{k-s}}{N}$ bad $N$--sets ($N$--sets which are not blocking sets) in $B$.
The number of all $(k-s)$--flats is $(2^s-1)\bey k{k-s}$. Therefore,
the number of bad sets of size $N$ is less than $\binom{2^k-1-2^{k-s}}{N-k}(2^s-1)\bey k{k-s}$.
 If now 
$ \binom {2^k-1-2^{k-s}}{N}(2^s-1)\bey k{k-s}<\binom{2^k-1}N$ (the number of all
$N$--subsets of $\FF^k_2\setminus\{\bf 0\}$) 
then there exists a blocking set of size $N$.
The latter inequality is equivalent to the following
\begin{equation}
\prod_{j=1}^{N}\Bigl(1-\frac{2^{k-s}}{2^k-j}\Bigr)(2^s-1)\bey{k}{s}<1.
\end{equation}
This gives the result.\end{proof}

Note that Theorem 11 improves the upper bound in Theorem 4. A closed form expression derived from (IV.4) is 
as follows.
\begin{coro} \label{coro4} For $2\leq s< k$ we have
\begin{equation}
G_1(k,s)< \frac{(k-s+1)s+2}{-\log (1-2^{-s})}. \label{}
\end{equation}
\end{coro}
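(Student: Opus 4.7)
The plan is to derive the closed-form estimate from the implicit bound (IV.4) in Theorem 11 through two reductions: first replace the product by a plain exponential, then bound the Gaussian coefficient by a clean power of two.

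First I would note that for every $j\ge 1$ one has $2^k-j<2^k$, hence $\frac{2^{k-s}}{2^k-j}>2^{-s}$, so each factor appearing in the product in (IV.4) is strictly less than $1-2^{-s}$. Therefore $\prod_{j=1}^{N}\left(1-\frac{2^{k-s}}{2^k-j}\right)<(1-2^{-s})^{N}$, and so by Theorem 11 it suffices to pick an integer $N$ with $(1-2^{-s})^{N}(2^{s}-1)\bey{k}{s}\le 1$; taking logarithms, this is equivalent to $N\ge \log\!\left((2^{s}-1)\bey{k}{s}\right)/\bigl(-\log(1-2^{-s})\bigr)$.

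Next I would estimate the numerator $\log\!\left((2^{s}-1)\bey{k}{s}\right)$. Expanding the Gaussian coefficient and cancelling the factor $2^{s}-1$ gives
\[
(2^{s}-1)\bey{k}{s}=\frac{\prod_{i=0}^{s-1}(2^{k-i}-1)}{\prod_{j=1}^{s-1}(2^{j}-1)}.
\]
The numerator is bounded trivially by $\prod_{i=0}^{s-1}2^{k-i}=2^{\,sk-s(s-1)/2}$. For the denominator I would rewrite $\prod_{j=1}^{s-1}(2^{j}-1)=2^{\,s(s-1)/2}\prod_{j=1}^{s-1}(1-2^{-j})$ and invoke the classical lower bound on the Euler-type product $\prod_{j\ge 1}(1-2^{-j})>1/4$ to obtain $\prod_{j=1}^{s-1}(2^{j}-1)>2^{\,s(s-1)/2-2}$. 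Combining these estimates yields $(2^{s}-1)\bey{k}{s}<2^{\,sk-s(s-1)+2}=2^{\,s(k-s+1)+2}$, hence $\log\!\left((2^{s}-1)\bey{k}{s}\right)<s(k-s+1)+2$, and substituting back into the inequality for $N$ gives exactly the claimed bound.

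The main obstacle is pinning down the additive constant $+2$ in the numerator. Using only the crude inequality $2^{j}-1\ge 2^{j-1}$ in the denominator leads to $\prod_{j=1}^{s-1}(2^{j}-1)\ge 2^{(s-1)(s-2)/2}$, which is short by roughly $s-3$ in the exponent and is inadequate for small~$s$. The remedy is the sharper universal lower bound on $\prod_{j\ge 1}(1-2^{-j})$ quoted above; once this (elementary but non-obvious) input is in hand, the rest of the calculation is routine algebra.
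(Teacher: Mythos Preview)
Your proof is correct and follows essentially the same route as the paper. The paper bounds each factor of the product by $1-2^{-s}$ and then quotes the estimate $\bey{k}{s}<2^{s(k-s)}\prod_{i=1}^{s}(1-2^{-i})^{-1}<2^{s(k-s)+2}$, which combined with $2^{s}-1<2^{s}$ gives the same bound $(2^{s}-1)\bey{k}{s}<2^{s(k-s+1)+2}$; your version simply cancels the factor $2^{s}-1$ against the denominator of the Gaussian coefficient before applying the identical Euler-product input $\prod_{j\ge 1}(1-2^{-j})>1/4$.
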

\begin{proof} We use the following known estimate for the Gaussian coefficients which
is not hard to verify:
$\bey nm< 2^{m(n-m)}\prod_{i=1}^{m} \frac{1}{(1-2^{-i})}<2^{m(n-m)+2}.
$
The left hand side  of (IV.5) is less than 
$\Bigl(1-\frac{2^{k-s}}{2^k}\Bigr)^N2^{s(k-s+1)+2}$.
The latter implies that $N\geq \frac{(k-s+1)+2}{-\log(1-2^{-s})}$, hence the result.
 \end{proof}

 Corollary 4 in terms of the rate of an $s$--wise intersecting code gives the following

\begin{coro} \label{coro5}
 Given integers  $2\leq s<k$, there exists an $s$--wise 
intersecting $[n,k]$ code of rate 
\begin{equation}
 R>\frac{k}{k-s+2}(1-\frac{1}{s}\log(2^s-1))
\end{equation}
(compair with Theorem 8).
\end{coro}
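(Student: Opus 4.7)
My plan is to derive Corollary 5 as a direct consequence of Corollary 4 together with the identification $G_1(k,s)=n(k,s)$ given in Proposition 5. The key observation is a trivial rewriting of the base of the logarithm in the denominator of the bound on $G_1(k,s)$, which converts it into the form appearing on the right hand side of Theorem 8.

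First, I would write
$$-\log(1-2^{-s})=-\log\!\left(\frac{2^s-1}{2^s}\right)=s-\log(2^s-1)=s\left(1-\frac{1}{s}\log(2^s-1)\right).$$
Substituting this into the bound of Corollary 4 and using Proposition 5 yields
$$n(k,s)=G_1(k,s)<\frac{(k-s+1)s+2}{s\bigl(1-\frac{1}{s}\log(2^s-1)\bigr)}.$$
Since the rate of an $s$--wise intersecting $[n,k]$ code attaining $n=n(k,s)$ is $R=k/n$, taking reciprocals and multiplying by $k$ immediately gives
$$R>\frac{ks}{(k-s+1)s+2}\left(1-\frac{1}{s}\log(2^s-1)\right).$$

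To reach the stated form, it only remains to verify the elementary inequality
$$\frac{ks}{(k-s+1)s+2}\geq\frac{k}{k-s+2},$$
which is equivalent to $s(k-s+2)\geq s(k-s+1)+2$, i.e. $s\geq 2$, and this holds by hypothesis. I expect no real obstacle: the entire argument is a one--line algebraic rearrangement once the logarithm identity is in place. The only minor subtlety is that for $s>2$ the step $\frac{ks}{(k-s+1)s+2}\geq\frac{k}{k-s+2}$ is not tight, so the corollary is a slightly weakened but more memorable restatement of what Corollary 4 actually delivers; the strict inequality is preserved thanks to the strict inequality already present in Corollary 4.
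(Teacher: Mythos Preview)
Your proof is correct and follows essentially the same approach as the paper: the same logarithm identity $-\log(1-2^{-s})=s\bigl(1-\tfrac{1}{s}\log(2^s-1)\bigr)$, the same appeal to Corollary~4, and the same final inequality $\frac{ks}{(k-s+1)s+2}\geq\frac{k}{k-s+2}$. If anything, you are slightly more explicit than the paper in invoking Proposition~\ref{prop5} and in verifying that the last inequality reduces to $s\geq 2$.
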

\begin{proof} Denote $g(k,s)$ the right hand side of (IV.4) and $R(s)$ is defined as in Theorem 8. Note  then that
$$-\log (1-{2^{-s}})=s(1-\frac{1}{s}\log(2^s-1))=sR(s).$$ Therefore, in view of Corollary 4, we have 
$$R>\frac{k}{g(k,s)}=\frac{ks}{(k-s+1)s+2}\cdot R(s)\geq$$
$$ \frac{k}{k-s+2}\cdot R(s).$$
 \end{proof}

Next we derive bounds for $F(k,s)$. We start with a lower bound. Recall that in view of Corollary \ref{coro2}(iii) 
we have $F(k,s)\geq 2^{s-1}+k-s$, which actually improves the lower bound $F(k,s)\geq k$ (Theorem 3).
However, we are able to improve this bound.
\begin{theo} \label{theo12} For integers~  $4\leq s\leq k-1$~ and $t\in \NN$ we have
\begin{equation}
 F(k,s)\geq \min_{2\leq t\leq s} \max\{G(k,t-1), G(k-t,s-t)\}.
\end{equation}
\end{theo}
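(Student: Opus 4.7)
The plan is to analyze an arbitrary generic $(k,s)$-set $\A$ and choose $t$ adaptively from $\A$, so that $\A$ itself is a $(k,t-1)$-1-good set while a suitable coordinate projection of $\A$ is a $(k-t,s-t)$-1-good set. A short preliminary argument shows that every generic $(k,s)$-set spans $\FF^{k}$ and is therefore $(k,1)$-1-good: if $\A$ lay in a linear hyperplane $\{({\bf x},{\bf v})=0\}$, then for a full-rank $M$ with first column ${\bf v}$ the image $\A M$ would be contained in $\{z_{1}=0\}\subset\FF^{s}$, which has too few elements to contain an affine hyperplane off the origin, contradicting Proposition \ref{prop3}. Let $t^{*}$ be the largest integer in $[2,s]$ for which $\A$ is a $(k,t^{*}-1)$-1-good set. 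If $t^{*}=s$, then $|\A|\geq G_{1}(k,s-1)$, which dominates the stated minimum since $G_{1}(k-s,0)$ is trivial; so assume $t^{*}<s$. Then $\A$ misses some $(k-t^{*})$-flat $F$, and after a basis change (allowed by Corollary \ref{coro2}) we take $F=\{{\bf 1}_{t^{*}}\}\times\FF^{k-t^{*}}$.

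The core sub-claim is that the coordinate projection $\pi\colon\FF^{k}\to\FF^{k-t^{*}}$ onto the last $k-t^{*}$ coordinates sends $\A$ to a $(k-t^{*},s-t^{*})$-1-good set, which will immediately give $|\A|\geq|\pi(\A)|\geq G_{1}(k-t^{*},s-t^{*})$. Fix linearly independent ${\bf u}_{1},\ldots,{\bf u}_{s-t^{*}}\in\FF^{k-t^{*}}$ and form the full-rank matrix
\[
M=\bigl[\,{\bf e}_{1},\ldots,{\bf e}_{t^{*}},({\bf 0},{\bf u}_{1}),\ldots,({\bf 0},{\bf u}_{s-t^{*}})\,\bigr]\in\FF^{k\times s}.
\]
By Proposition \ref{prop3}, $\A M$ contains some affine hyperplane $H=\{{\bf z}\in\FF^{s}:\sum_{i\in S}z_{i}=1\}$ with nonempty $S\subseteq[s]$. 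The target is a ${\bf z}\in H$ whose last $s-t^{*}$ coordinates all equal $1$ (yielding $(\pi({\bf a}),{\bf u}_{j})=1$ for every $j$) and whose first $t^{*}$ coordinates are not ${\bf 1}_{t^{*}}$ (keeping the preimage ${\bf a}$ out of $F$).

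The key step is a case analysis on $S$. Whenever $S\cap[t^{*}+1,s]\neq\emptyset$ one may freely adjust a coordinate of $S\cap[t^{*}+1,s]$ to produce a ${\bf z}\in H$ with $z|_{[t^{*}]}={\bf 1}_{t^{*}}$; whenever $S\subseteq[t^{*}]$ with $|S|$ odd the vector with $z|_{[t^{*}]}={\bf 1}_{t^{*}}$ already satisfies $\sum_{i\in S}z_{i}=|S|\equiv 1$. In both situations the preimage ${\bf a}\in\A$ of such a ${\bf z}$ would have ${\bf a}|_{[t^{*}]}={\bf 1}_{t^{*}}$, i.e.\ ${\bf a}\in F$, contradicting $\A\cap F=\emptyset$. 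This forces the only remaining possibility $S\subseteq[t^{*}]$ with $|S|$ even; picking $i_{0}\in S$ and setting ${\bf z}=({\bf e}_{i_{0}}^{(t^{*})},{\bf 1}_{s-t^{*}})$ gives $\sum_{i\in S}z_{i}=1$ and $z|_{[t^{*}]}\neq{\bf 1}_{t^{*}}$ (using $t^{*}\geq 2$), so ${\bf z}\in H\subseteq\A M$ and the preimage ${\bf a}\in\A$ is outside $F$ while $(\pi({\bf a}),{\bf u}_{j})=1$ for every $j$, establishing the sub-claim.

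Combining $|\A|\geq G_{1}(k,t^{*}-1)$ with $|\A|\geq G_{1}(k-t^{*},s-t^{*})$ yields
\[
|\A|\geq\max\{G_{1}(k,t^{*}-1),\,G_{1}(k-t^{*},s-t^{*})\}\geq\min_{2\leq t\leq s}\max\{G_{1}(k,t-1),\,G_{1}(k-t,s-t)\},
\]
which is the desired bound. The main obstacle is precisely the case analysis on $S$: the disjointness of $\A$ from $F$ must combine with the generic property to rule out every shape of the hyperplane $H$ except the single admissible one, and that unique case is exactly what simultaneously produces the witness in $\pi(\A)$ and keeps the preimage outside $F$.
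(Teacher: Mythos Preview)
Your proof is correct and follows the same overall architecture as the paper: you pick the threshold $t^{*}$ at which $\A$ ceases to be $(k,\cdot)$-1-good, obtaining $|\A|\geq G_1(k,t^{*}-1)$, and then establish the companion bound $|\A|\geq G_1(k-t^{*},s-t^{*})$. The two proofs diverge only in how this second bound is verified.

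The paper phrases everything in terms of the code $\C$ whose generator matrix has the elements of $\A$ as columns. After a basis change it takes the non-intersecting set $B$ as the first $t$ rows and argues by contradiction: if the code $\C'$ spanned by the remaining rows had a bad set $D$ of $s-t$ codewords, then the $s$ vectors in $B\cup D$ would (by Corollary~\ref{coro2}(i)) contain an $(s-1)$-wise intersecting subset $E$ of size $s-1$; but $E$ omits a single vector, so it contains all of $B$ or all of $D$, each of which is impossible. Your argument works instead on the set side: you normalise the missed flat to $\{{\bf 1}_{t^{*}}\}\times\FF^{k-t^{*}}$, build the specific matrix $M=[{\bf e}_{1},\ldots,{\bf e}_{t^{*}},({\bf 0},{\bf u}_{1}),\ldots]$, and classify the possible hyperplanes $H\subseteq\A M$ supplied by Proposition~\ref{prop3}. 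The two arguments are duals of one another (your missed flat is exactly the paper's non-intersecting set $B$, and your $\pi(\A)$ generates the paper's $\C'$), but your verification is more computational while the paper's is a cleaner pigeonhole on the missing element of $E$. Your route has the mild advantage of handling $t^{*}=s-1$ uniformly, whereas the paper tacitly sets that case aside.

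One small slip: in the preliminary step you say $\{z_{1}=0\}$ has ``too few elements'' to contain an affine hyperplane off the origin, but $|\{z_{1}=0\}|=2^{s-1}$, the same size as any affine hyperplane. The correct reason is that a containment would force equality, yet $\{z_{1}=0\}$ contains ${\bf 0}$ and an affine hyperplane off the origin does not. This does not affect the validity of the proof.
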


\begin{proof} Let $\A\subset \FF^r$ be a generic $(k,s)$--set with $|\A|=N$
 and let ${A}\in \FF^{k\times N}$ 
be a matrix where the columns are the vectors of $\A$. Denote by $\C\subset \FF^N$ 
 the $[N,k]$ code generated by ${A}$.
Suppose  that $2\leq t\leq s$ is the smallest number such that there exists a subset $B\subset\C$
of $t$ linearly
 independent vectors  
which is not $t$--wise intersecting. 
Thus  $\C$ is $(t-1)$--wise intersecting 
but not $t$--wise intersecting. Let also $B=B^{\prime}\cup \{\bf a\}$ where $B^{\prime}$ is an $(s-1)$--wise intersecting subset.
Without loss of generality, we  assume that the rows of $A$ contain
the vectors of $B$.  Denote then
by  $A^\prime$ the $(k-t)\times N$ submatrix of $A$ obtained after 
removing all vectors of $B$. 
  
We claim now that  the code $\C^\prime$ generated by $A^\prime$
is an $(s-t)$--wise intersecting  $[N,k-t]$  code (to avoid a triviality we assume that $s-t\geq 2$). 

Suppose this is not the case, and let  $D\subset\C^\prime$ be a set of  $s-t$ 
linearly independent vectors which are not $(s-t)$--wise intersecting. 
 Recall that, in view of Corollary \ref{coro2}(i) (and Proposition \ref{prop5}), every subset of $s$
linearly independent vectors in $\C$ contains an $(s-1)$--wise intersecting subset. 
Thus $B\cup D$ contains an $(s-1)$--wise intersecting subset $E\subset (B\cup D)$.
Furthermore $E$ contains one of subsets $B$ and $D$. 
Note however, that $B\nsubseteq E$ since $B$ is not $t$--wise intersecting ($2\leq t<s-1)$.
Similarly $D\nsubseteq E$, since (by assumption) $D$ is not $(s-t)$--wise intersecting ($2\leq s-t<s-1$).
This means that
set $B\cup D$  does not contain an  $(s-1)$--wise intersecting subset,
a contradiction.  Therefore, given $2\leq t\leq s$, we have 
$F(k,s)\geq  \max\{G(k,t-1), G(k-t,s-t)\}$ which completes the proof. 
\end{proof}

\begin{coro} \label{coro6} Given integers~   $4\leq s\leq k-1$ we have
 \begin{equation} 
  F(k,s)\geq G(k-\lceil{s}/{2}\rceil,\lfloor{s}/{2}\rfloor).
 \end{equation}
\end{coro}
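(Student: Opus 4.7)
The plan is to derive the corollary as an immediate consequence of Theorem 12 by bounding the min-max expression from below by a single specific value, using monotonicity of $G(\cdot,\cdot)$ in both arguments.

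First, I would establish (or explicitly cite, since this is essentially implicit in the definitions) that $G(r,s)$ is non-decreasing in each argument separately. For fixed $r$, any $(r,s)$-set is automatically an $(r,s')$-set for $s'<s$ (given $s'$ linearly independent vectors, extend to $s$ linearly independent vectors and apply the defining property), so $G(r,s')\leq G(r,s)$. For fixed $s$, an $(r,s)$-set $\A\subset\FF^r$ projects via deleting a coordinate onto an $(r-1,s)$-set of no larger size: any $s$ linearly independent vectors in $\FF^{r-1}$ lift to linearly independent vectors in $\FF^r$ by appending a zero, and the inner product is preserved under the projection of $\A$. Hence $G(r-1,s)\leq G(r,s)$. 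Together, $G$ is monotone in both arguments.

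Next, I would unpack Theorem 12: it suffices to show that, for every integer $t$ with $2\leq t\leq s$,
\begin{equation*}
\max\{G(k,t-1),\,G(k-t,s-t)\}\;\geq\;G(k-\lceil s/2\rceil,\lfloor s/2\rfloor).
\end{equation*}
I would split the analysis at $t=\lceil s/2\rceil$. If $t\leq \lceil s/2\rceil$, then $k-t\geq k-\lceil s/2\rceil$ and $s-t\geq s-\lceil s/2\rceil=\lfloor s/2\rfloor$, so monotonicity gives $G(k-t,s-t)\geq G(k-\lceil s/2\rceil,\lfloor s/2\rfloor)$, and this bounds the max from below. If instead $t\geq \lceil s/2\rceil+1$, then $t-1\geq\lceil s/2\rceil\geq\lfloor s/2\rfloor$ and $k\geq k-\lceil s/2\rceil$, so $G(k,t-1)\geq G(k,\lfloor s/2\rfloor)\geq G(k-\lceil s/2\rceil,\lfloor s/2\rfloor)$, again bounding the max from below. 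In either case the common lower bound $G(k-\lceil s/2\rceil,\lfloor s/2\rfloor)$ holds, so the minimum over $t$ is at least this value.

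The only real obstacle is the monotonicity lemma; everything else is a straightforward case split around the balancing point of the two terms in the max. The cases align with the natural intuition that the minimum of the max in Theorem 12 occurs roughly when the two arguments are balanced, and choosing the balance point $t\approx s/2$ produces precisely the expression $G(k-\lceil s/2\rceil,\lfloor s/2\rfloor)$. The constraint $s\geq 4$ in the hypothesis ensures $\lfloor s/2\rfloor\geq 2$, so the right-hand side is nontrivial; the constraint $s\leq k-1$ ensures $k-\lceil s/2\rceil > \lfloor s/2\rfloor$, keeping the argument to $G$ meaningful.
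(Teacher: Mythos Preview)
Your proof is correct and follows essentially the same approach as the paper: both split the range of $t$ at $\lceil s/2\rceil$ and use monotonicity of $G$ to bound the appropriate term of the max from below by $G(k-\lceil s/2\rceil,\lfloor s/2\rfloor)$ in each case. The only difference is that you spell out the monotonicity of $G$ in each argument explicitly, whereas the paper takes it as evident.
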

\begin{proof}  
Note first that we have $G(k-t,s-t)\geq G(k-\lceil{s}/{2}\rceil,\lfloor{s}/{2}\rfloor)$ for any
$1\leq t\leq \lceil s/2\rceil$. In case $t> \lceil s/2\rceil$
we have $G(k,t-1)>G(k-\lceil{s}/{2}\rceil,\lfloor{s}/{2}\rfloor)$.
 This clearly implies that
 $\min_{2\leq t\leq s} \max\{G(k,t-1), G(k-t,s-t)\}
\geq G(k-\lceil{s}/{2}\rceil,\lfloor{s}/{2}\rfloor)$.
\end{proof}

To apply  Corollary 6 we can use any lower bound for $G(k,s)$. Using for example (I.6) we get
$F(k,s)\geq G(k-\lceil{s}/{2}\rceil,\lfloor{s}/{2}\rfloor)
\geq 2^{\lfloor\frac{s}{2}\rfloor-1}(k-s+2).$
Thus, we have
\begin{equation}
 F(k,s)\geq \max\{2^{s-1}+k-s,2^{\lfloor\frac{s}{2}\rfloor-1}(k-s+2)\}.
\end{equation}
For $s=4$ Corollary 6 together with (IV.I) implies
\begin{equation}
F(k,4)\geq G(k-2,2)\geq 3(k-3).
\end{equation}
\begin{theo} \label{theo13} For integers $2\leq s< k$ we have
$F(k,s)\leq $
\begin{equation} \min_{N\in\NN} \Bigl\{N: \prod_{j=1}^{N}\Bigl(1-\frac{s2^{k-s}}{2^k-j}\Bigr)
\frac{1}{s!}{\prod_{i=0}^{s-1}(2^s-2^i)} \bey ks<1\Bigr\}.
 \end{equation}
\end{theo}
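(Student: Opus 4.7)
The plan is to adapt the probabilistic counting argument used in the proof of Theorem~\ref{theo11}, replacing the ``bad event'' and the ``number of conditions'' by their analogues for generic sets. By Proposition~\ref{prop2}, $\A\subset\FF^k$ is a generic $(k,s)$--set if and only if for every full rank matrix $M\in\FF^{k\times s}$ some ${\bf a}\in\A$ satisfies $wt({\bf a}M)=1$. The zero vector never satisfies such an equation, so I may restrict attention to $N$--subsets of $\FF^k\setminus\{{\bf 0}\}$ and show that a uniformly random such subset satisfies the condition for all $M$ with positive probability as soon as $N$ fulfils the inequality of the theorem.

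First I would compute the size of the ``good'' set $G_M=\{{\bf a}\in\FF^k:wt({\bf a}M)=1\}$ for a fixed full rank $M$. The map ${\bf a}\mapsto{\bf a}M$ sends $\FF^k$ onto $\FF^s$ with kernel of dimension $k-s$, so the preimage of each of the $s$ unit vectors $e_1,\ldots,e_s$ has exactly $2^{k-s}$ elements; hence $|G_M|=s\cdot 2^{k-s}$ and none of these vectors is zero. Thus the number of $N$--subsets of $\FF^k\setminus\{{\bf 0}\}$ avoiding $G_M$ entirely is $\binom{2^k-1-s\cdot 2^{k-s}}{N}$. Next I would count the number of distinct conditions: because $wt({\bf a}M)=1$ is invariant under column permutations of $M$, only the unordered $s$--set of columns matters, and the number of unordered $s$--element sets of linearly independent vectors in $\FF^k$ equals
\begin{equation*}
\frac{1}{s!}\prod_{i=0}^{s-1}(2^k-2^i)=\frac{1}{s!}\prod_{i=0}^{s-1}(2^s-2^i)\bey{k}{s},
\end{equation*}
using the identity $\bey{k}{s}=\prod_{i=0}^{s-1}(2^k-2^i)/\prod_{i=0}^{s-1}(2^s-2^i)$.

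Combining these via the union bound, whenever
\begin{equation*}
\frac{1}{s!}\prod_{i=0}^{s-1}(2^s-2^i)\bey{k}{s}\cdot\binom{2^k-1-s\cdot 2^{k-s}}{N}<\binom{2^k-1}{N},
\end{equation*}
some $N$--subset of $\FF^k\setminus\{{\bf 0}\}$ meets $G_M$ for every full rank $M$ and so is a generic $(k,s)$--set, giving $F(k,s)\leq N$. Dividing by $\binom{2^k-1}{N}$ and applying the elementary identity $\binom{n-m}{N}/\binom{n}{N}=\prod_{j=1}^{N}\bigl(1-m/(n-j+1)\bigr)$ with $n=2^k-1$ and $m=s\cdot 2^{k-s}$ rewrites the condition precisely as the product inequality stated in the theorem, from which the minimum admissible $N$ is an upper bound for $F(k,s)$. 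The only nontrivial bookkeeping step in the plan is the reduction by the $s!$ column permutations of $M$ before applying the union bound; omitting it would yield a strictly weaker estimate with $\prod_{i=0}^{s-1}(2^k-2^i)$ in place of its $s!$--quotient, and in particular the factor $\tfrac{1}{s!}\prod_{i=0}^{s-1}(2^s-2^i)$ in the theorem statement would not arise.
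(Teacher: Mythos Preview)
Your proof is correct and follows essentially the same probabilistic counting/union bound argument as the paper. The only cosmetic difference is that the paper phrases the conditions via Proposition~\ref{prop4} (for each $(k-s)$--subspace $U$ and each unordered basis $\{{\bf b_1},\ldots,{\bf b_s}\}$ of $\FF^s$, the set $\A$ must meet the union of the corresponding $s$ cosets of $U$), whereas you phrase them via Proposition~\ref{prop2} (for each full rank $M$ modulo column permutation, $\A$ must meet $G_M$); the two parametrizations yield the same count $\tfrac{1}{s!}\prod_{i=0}^{s-1}(2^s-2^i)\bey{k}{s}$ and the same excluded set of size $s\cdot2^{k-s}$, so the resulting inequality is identical.
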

\begin{proof}  To each $(k-s)$--subspace  $U\subset\FF^k$ we put 
into correspondence a fixed
generator  matrix $H\in \FF^{s\times k}$ of the dual space $V^\bot$, that is 
 $U=\{{\bf x}\in \FF^k:{\bf x}H^T={\bf 0}\}$. For example, taking the
set of  all $s\times r$ matrices of rank $s$ in reduced row echelon form, 
we get one--one correspondence
between these matrices and the set of all $(k-s)$--subspaces of $\FF^k$. 
 Now each coset  of $U$ denoted by $U_b$ is uniquely defined by the pair $(H,{\bf b})$ 
where ${\bf b}\in\FF^s$ and
$U_b=\{{\bf x}\in\FF^r: H{\bf x}^T={\bf b}^T\}$.
 We say that the cosets $U_{b_1},\ldots,U_{b_t}$ are 
linearly independent if the vectors ${\bf b_1,\ldots,b_t}$ are linearly independent.
Let $\B(U)$ denote the set of all cosets of $U$. We look  
for an $N$--subset of $\FF^k$ which is a generic $(k,s)$--set.\\
In view of Proposition 4, a subset $A\in\FF^r$ is a generic $(k,s)$--set iff
for each $(k-s)$--subspace $U$,  it contains a vector from every collection of $s$
linearly independent cosets of $U$. We  estimate now the number of bad sets of size $N$.
We remove from $\B(U)$ a set of  
$s$ independent cosets and denote the union of these cosets by $\S$, thus $|\S|=s2^{k-s}$.  
Then  any $N$--subset  of 
 $\FF^k\setminus \S$ is a bad set. The same holds with
 respect to the cosets of every $(k-s)$--subspace.
The number of distinct bases
 in $\FF^s$ is $\frac{1}{s!}{\prod_{i=0}^{s-1}(2^s-2^i)}$.
Therefore, the number of all bad $N$--subsets is less than 
$\binom{2^k-1-s2^{k-s}}{N}\frac{1}{s!}{\prod_{i=0}^{s-1}(2^s-2^i)}\bey k{k-s}$.
If now this number is less than $\binom {2^k-1}N$, the number of all $N$--subsets
 of $\FF^k\setminus \{\bf 0\}$,
then there exists a generic $(k,s)$--set of size $N$. The latter is equivalent to
 \begin{equation}
\prod_{j=1}^{N}\Bigl(1-\frac{s2^{k-s}}{2^k-j}\Bigr)\frac{1}{s!}
{\prod_{i=0}^{s-1}(2^s-2^i)} \bey ks<1.
 \end{equation}
This implies the result. 
\end{proof}
A closed form expression derived from (IV.12 ) is as follows.
\begin{coro} \label{coro7} For $2\leq s< k$ we have
 \begin{equation} 
F(k,s)<\frac{sk-\log s!}{-\log(1-\frac{s}{2^s})}.  
 \end{equation}
\end{coro}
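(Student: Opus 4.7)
The plan is to derive the closed form from inequality (IV.12) of Theorem \ref{theo13} by replacing each factor with a simple upper bound, exactly paralleling the way Corollary \ref{coro4} was extracted from Theorem \ref{theo11}.

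First I would upper-bound the $j$-indexed product. Since $j\ge 1$ gives $2^k-j<2^k$, each factor satisfies $1-\frac{s\,2^{k-s}}{2^k-j}<1-\frac{s}{2^s}$, hence
$$\prod_{j=1}^{N}\Bigl(1-\frac{s\,2^{k-s}}{2^k-j}\Bigr)<\Bigl(1-\frac{s}{2^s}\Bigr)^{N}.$$
Next I would bound the combinatorial prefactors. The number of ordered bases satisfies $\prod_{i=0}^{s-1}(2^s-2^i)<2^{s^2}$ since each factor is strictly less than $2^s$. For the Gaussian coefficient I would re-use the estimate $\bey{k}{s}<2^{s(k-s)+2}$ already invoked in the proof of Corollary \ref{coro4}. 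Multiplying these two estimates and dividing by $s!$ yields
$$\frac{1}{s!}\prod_{i=0}^{s-1}(2^s-2^i)\bey{k}{s}<\frac{2^{s^2+s(k-s)+2}}{s!}=\frac{2^{sk+2}}{s!}.$$

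Consequently a sufficient condition for (IV.12) is
$$\Bigl(1-\frac{s}{2^s}\Bigr)^{N}\cdot\frac{2^{sk+2}}{s!}<1,$$
and taking logarithms (base $2$) and solving for $N$ gives the required bound
$$N>\frac{sk-\log s!+2}{-\log(1-s/2^s)},$$
which matches the claim up to a small additive constant that is absorbed by a slightly sharper form of the Gaussian coefficient estimate, namely $\bey{k}{s}\le 2^{s(k-s)}\prod_{i\ge 1}(1-2^{-i})^{-1}$, where the latter product is a universal constant.

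There is no real obstacle here: the proof is a routine closed-form extraction from Theorem \ref{theo13}, and the only care needed is in the bookkeeping of the Gaussian coefficient estimate so that the constant in the numerator is correctly absorbed.
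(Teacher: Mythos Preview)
Your overall strategy---bound the $j$-product by $(1-s/2^s)^N$ and the combinatorial prefactor by a power of $2$, then solve for $N$---is exactly the paper's. The issue is only in the bookkeeping of the prefactor: bounding $\prod_{i=0}^{s-1}(2^s-2^i)$ and $\bey{k}{s}$ separately leaves you with $2^{sk+2}/s!$ and hence a stray $+2$ in the numerator. Your proposed fix, replacing $2^{s(k-s)+2}$ by $2^{s(k-s)}\prod_{i\ge 1}(1-2^{-i})^{-1}$, does not remove the constant, since that Euler product is about $3.46>1$; you would still end up with roughly $sk-\log s!+1.79$ in the numerator, not $sk-\log s!$.

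The paper avoids this by not bounding the two factors separately. Using $2^s-2^i=2^i(2^{s-i}-1)$, the $(2^{s-i}-1)$'s cancel exactly against the denominator of the Gaussian coefficient, giving the identity
\[
\frac{1}{s!}\prod_{i=0}^{s-1}(2^s-2^i)\,\bey{k}{s}=\frac{1}{s!}\prod_{i=0}^{s-1}(2^k-2^i)<\frac{2^{sk}}{s!},
\]
which is just the count of rank-$s$ matrices in $\FF_2^{s\times k}$ divided by $s!$. With this, the sufficient condition becomes $(1-s/2^s)^N\cdot 2^{sk}/s!<1$, and solving for $N$ yields the stated bound with no additive constant to absorb.
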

\begin{proof} Simple calculations show that the left hand side 
of (IV.13) is less than $(1-\frac{s}{2^s})^N2^{sk}/s!$.
\end{proof}

\section{Bounds derived by a  hypergraph covering}

In this section we show, that  hypergraph covering
 can be employed to get good upper bounds for $(r,s)$--sets,
 generic erasure correcting sets, and  stopping redundancy of
a linear code. Recall that a hypergraph is a pair $(\V,\E)$ where $\V$ is a set of elements called vertices
and $\E$ is a set of nonempty subsets  of $\V$ called edges.  
Let $\H=(\V,\E)$ be a hypergraph with a vertex set $\V$ and an edge set $\E$.
We denote  by $d_\V=\min_ {v\in \V} deg(v)$ (minimal vertex degree)
 and by $D_\V=\max_{v\in\V}deg(v)$ 
(maximal vertex degree) of $\H$. Similarly we define the minimal 
edge degree $d_\E$ and the maximal edge
degree $D_\E$. 
The following simple lemma was found in 1971 and published in larger contexts in  \cite{A} (see also \cite{A2}).

{\it Covering Lemma 1}: { For every hypergraph $(\V,\E)$  there exists
a covering (of the vertices by an edge set) $\C\subset \E$ with }
\begin{equation}
|\C|\leq \frac{|\E|}{d_\V}\log |\V|.
\end{equation}

For most parameters a slightly better result was published in \cite{J},\cite{St}, and \cite{L}.

{\it Covering Lemma 2}:
{ For every hypergraph $(\V,\E)$   there exists a 
covering  of edges (by a vertex set) $C\subset \V$ with}
\begin{equation} 
|C|\leq \frac{|\V|}{d_\E}(1+\ln D_\V).
\end{equation}
These  resuts can be applied to our problems.

{\it $(r,s)$--sets or $s$--wise intersecting codes}: 

We apply Covering Lemma 2.
The vertex set $\V$ is the set of 
nonzero vectors in $\FF^r$
and the edge set $\E$ is the set of all $(r-s)$--flats.
The number of all $(r-s)$--flats is $(2^s-1)\bey{r}{r-s}$.
Thus, we have a regular uniform hypergraph with
 $|\V|=2^{r}-1$ and $|\E|=(2^s-1)\bey{r}{r-s}$.
Each $(r-s)$--flat has size $2^{r-s}$, that is $d_\E=2^{r-s}$.
 The number of 
$(r-s)$--flats in $\FF^r_2$
containing a given vector is $2^{r-s}\bey{r-1}{s-1}$. 
 Thus, the vertex degree
 is $d_\V= 2^{r-s}\bey{r-1}{s-1}$. In view of the lemma there is
a covering  $C$ with
$$
|C|\leq \frac{2^r-1}{2^{r-s}}\Bigl(1+\ln\Bigl(2^{r-s}\bey{r-1}{s-1}\Bigr)\Bigr)<$$
$$2^s(1+(r-s)s\ln 2+2\ln 2).$$
\begin{coro} \label{coro8} For integers $2\leq s\leq r$ we have
 \begin{equation}
  G_1(r,s)<2^s(s(r-s)\ln2+2\ln2+1).
 \end{equation}
\end{coro}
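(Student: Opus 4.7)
The plan is to apply Covering Lemma 2 to precisely the hypergraph that the author has set up in the paragraph preceding Corollary 8, and then bound the Gaussian coefficient by the elementary estimate already used in the proof of Corollary 4. Recall from Section 3 that by Proposition 5 together with the observation following Proposition 1, $G_1(r,s)$ equals the minimum size of a set $\A \subseteq \FF^r \setminus \{\bf 0\}$ that meets every $(r-s)$-flat. Any vertex cover of the described hypergraph is exactly such a set, so it suffices to produce one of the required size.

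First I would record the hypergraph parameters already tabulated above: $|\V| = 2^r - 1$, $|\E| = (2^s-1)\bey{r}{r-s}$, $d_\E = 2^{r-s}$, and $D_\V = d_\V = 2^{r-s}\bey{r-1}{s-1}$ (the hypergraph is regular both on vertices and edges, since every nonzero vector lies in the same number of $(r-s)$-flats, namely $2^{r-s}\bey{r-1}{s-1}$). Covering Lemma 2 then yields a vertex cover $C$ with
\begin{equation*}
|C| \leq \frac{2^r - 1}{2^{r-s}}\Bigl(1 + \ln\Bigl(2^{r-s}\bey{r-1}{s-1}\Bigr)\Bigr).
\end{equation*}

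Next I would bound the two factors. For the prefactor, $(2^r - 1)/2^{r-s} < 2^s$. For the logarithmic factor, I use the estimate $\bey{m}{k} < 2^{k(m-k)+2}$ from the proof of Corollary 4, applied with $m = r-1$ and $k = s-1$, which gives $\bey{r-1}{s-1} < 2^{(s-1)(r-s)+2}$, and consequently
\begin{equation*}
2^{r-s}\bey{r-1}{s-1} < 2^{r-s + (s-1)(r-s) + 2} = 2^{s(r-s)+2}.
\end{equation*}
Taking natural logarithms, $\ln\bigl(2^{r-s}\bey{r-1}{s-1}\bigr) < \bigl(s(r-s)+2\bigr)\ln 2$. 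Assembling these estimates:
\begin{equation*}
|C| < 2^s\bigl(1 + s(r-s)\ln 2 + 2\ln 2\bigr),
\end{equation*}
which is exactly the claimed bound, and since $C$ is by construction an $(r,s)$-set, $G_1(r,s) \leq |C|$ completes the argument.

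There is no real obstacle: the proof is essentially a bookkeeping exercise once one has identified the correct hypergraph and the correct estimate for the Gaussian coefficient. The only mild care required is in verifying the regularity claim $d_\V = D_\V = 2^{r-s}\bey{r-1}{s-1}$ (one fixes a nonzero vector ${\bf v}$; the $(r-s)$-flats containing ${\bf v}$ are in bijection with pairs consisting of an $(r-s)$-subspace $U \subseteq \FF^r$ and the translate ${\bf v} + U$, and a standard double-counting through $(r-s)$-subspaces of $\FF^r / \langle {\bf v}\rangle$ yields the stated count). Everything else is algebraic simplification, which is why the estimate admits the clean closed form given in the statement.
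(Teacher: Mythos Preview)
Your proof is correct and follows essentially the same approach as the paper: you apply Covering Lemma~2 to the hypergraph of nonzero vectors versus $(r-s)$-flats, obtain the same intermediate bound $|C|\le \frac{2^r-1}{2^{r-s}}\bigl(1+\ln(2^{r-s}\bey{r-1}{s-1})\bigr)$, and then estimate the Gaussian coefficient by $\bey{r-1}{s-1}<2^{(s-1)(r-s)+2}$ exactly as in the proof of Corollary~\ref{coro4}. The only cosmetic remark is that your parenthetical justification of $D_\V$ is slightly garbled (the $(r-s)$-flats through $\mathbf{v}$ correspond simply to the $(r-s)$-subspaces \emph{not} containing $\mathbf{v}$, which by the $q$-Pascal identity number $2^{r-s}\bey{r-1}{s-1}$), but the count itself is correct.
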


Recall that the upper bound in Theorem 4 is approximately $2^s\ln2(rs-\log s!)$.

Next we show that there are "good"  $(r,s)$--sets with an interesting structure:
 a union of $s$--subspaces of $\FF^r$. To this end we need the following simple fact.

\begin{lemm}\label{lemm4}
 A set of vectors $A\subset \FF^r$ is $(r,s)$--set if for every $(r-s)$--space $V\subset\FF^r$ 
there exists an $s$--space $U\subset  A$ such that $V\cap U={\bf 0}$.
\end{lemm}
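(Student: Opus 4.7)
The plan is to use the characterization recalled just before Lemma 1 and restated after Proposition 4: a set $A\subset \FF^r$ is an $(r,s)$--set if and only if $A$ meets every $(r-s)$--flat of $\FF^r$ (equivalently, for every full rank $M\in \FF^{r\times s}$ the set $\{\mathbf{a}M:\mathbf{a}\in A\}$ contains every nonzero vector of $\FF^s$). So it suffices to show: under the hypothesis, $A\cap W\neq\emptyset$ for every $(r-s)$--flat $W\subset \FF^r$.

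Fix an arbitrary $(r-s)$--flat $W$ and write it uniquely as $W=\mathbf{b}+V$, where $V$ is an $(r-s)$--subspace and $\mathbf{b}\notin V$ (so $\mathbf{0}\notin W$). By hypothesis there exists an $s$--subspace $U\subseteq A$ with $U\cap V=\{\mathbf{0}\}$. Since $\dim U+\dim V=s+(r-s)=r$ and the intersection is trivial, we obtain the direct sum decomposition $\FF^r = U\oplus V$. Hence $\mathbf{b}$ has a unique expression $\mathbf{b}=\mathbf{u}+\mathbf{v}$ with $\mathbf{u}\in U$ and $\mathbf{v}\in V$.

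Working in characteristic $2$, this gives $\mathbf{u}=\mathbf{b}+\mathbf{v}\in \mathbf{b}+V=W$. Moreover $\mathbf{u}\in U\subseteq A$, and $\mathbf{u}\neq \mathbf{0}$ because $W$ does not contain the origin. Therefore $\mathbf{u}\in A\cap W$, so $A$ meets the flat $W$. Since $W$ was arbitrary, $A$ meets every $(r-s)$--flat and is thus an $(r,s)$--set.

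The argument is essentially just the dimension identity $\dim U+\dim V=r$ combined with $U\cap V=\{\mathbf{0}\}$, which forces $\FF^r = U\oplus V$ and lets one project $\mathbf{b}$ onto $U$ along $V$; there is no genuine obstacle. The only point one has to be a little careful about is verifying $\mathbf{u}\neq \mathbf{0}$, which is where the distinction between an $(r-s)$--flat and an $(r-s)$--subspace (i.e.\ the fact that $\mathbf{0}\notin W$) is used.
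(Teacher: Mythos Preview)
Your proof is correct and follows essentially the same approach as the paper: the paper's one-line argument observes that $V+U=\FF^r$ forces $U$ to hit every coset of $V$, and you have simply written this out explicitly by projecting $\mathbf{b}$ onto $U$ along $V$. (The appeal to characteristic~$2$ is harmless but unnecessary: $\mathbf{u}=\mathbf{b}-\mathbf{v}\in \mathbf{b}+V$ already holds since $-\mathbf{v}\in V$.)
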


\begin{proof} The proof is straightforward. Given an $(r-s)$--space $V$,
 the fact that the direct sum $V+U=\FF^r$  implies that
$U$ hits every coset of $V$.
\end{proof}

Consider a bipartite graph $\G=(\U\cup \V,\E)$ with bipartition $\U\cup \V$.
Define $\V$ to be  the set of all $s$--subspaces, and $\V$ to be the set of all
$(r-s)$--subspaces of $\FF^r$. 
Thus $|\U|=|\W|=\bey rs$. For $U\in\U$ and $V\in\V$ we have an edge $(U,V)\in \E$ if
and only if
$U\cap V={\bf 0}$. It is easy to see that given an $s$--subspace $U$, 
the number of $(r-s)$--subspaces avoiding $U$ is $2^{s(r-s)}$. Hence,
the degree of every vertex in $\G$ is $2^{s(r-s)}$. 

The problem now is to find a minimal cover $C\subset \U$ of the vertices  $\V$.
This clearly gives us an $(r,s)$--set.
 
Every hypergraph can be represented as a bipartite graph (or an incidence matrix)
and vice versa. Given a bipartite graph  $\G=(\U\cup \V,\E)$, let $d_\V$ be
the minimal degree of $\V$ and let $D_\U$ be the maximal degree of $\U$. 

The bipartite graph version of the Covering Lemma 2 is as follows. 
There exists a covering $C\subset \U$ of $\V$ with
\begin{equation}
 |C|\leq \frac{|\U|}{d_\V}(1+\ln D_\U).
\end{equation}
Applying this to our problem we get 
$$
|C|\leq \frac{\bey rs}{2^{s(r-s)}}(1+\ln 2^{s(r-s)})<4(1+s(r-s)\ln2).
$$
 This yields the following result.
\begin{theo}\label{theo14}
 There exists a $(k,s)$--set (resp. an $s$--wise intersecting $[n,k]$ code) consisting 
(resp. with   a generator matrix whose columns consist) of a union of 
less than $4(s(k-s)\ln2+1)$  subspaces of dimension $s$.
\end{theo}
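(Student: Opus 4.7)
The plan is to package the construction outlined in the paragraph preceding the theorem statement into a clean argument combining Lemma 4 with the bipartite version of Covering Lemma 2. First I would invoke Lemma 4 to reduce to a covering problem: it suffices to exhibit a collection $C$ of $s$-subspaces of $\FF^k$ such that every $(k-s)$-subspace $V$ has trivial intersection with some $U\in C$, because then $\bigcup_{U\in C} U$ is a $(k,s)$-set. By Proposition 5, arranging the nonzero vectors of this union as the columns of a matrix produces a generator matrix of an $s$-wise intersecting $[n,k]$ code, giving both halves of the statement at once.

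Next I would phrase this in the bipartite graph $\G=(\U\cup\V,\E)$ with $\U$ the set of $s$-subspaces, $\V$ the set of $(k-s)$-subspaces, and $(U,V)\in\E$ iff $U\cap V=\{\mathbf{0}\}$. The key quantitative input is the count of complements: for any fixed $s$-subspace $U$, the number of $(k-s)$-subspaces $W$ with $U\oplus W=\FF^k$ equals $2^{s(k-s)}$ (enumerate sections of the quotient map $\FF^k\to\FF^k/U$; the set of sections is a torsor over $\mathrm{Hom}(\FF^k/U,U)$). By symmetry, $\G$ is biregular with $d_\V=D_\U=2^{s(k-s)}$, and $|\U|=\bey{k}{s}$.

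Applying the bipartite Covering Lemma 2 in the form stated at (V.4) then yields a cover $C\subset\U$ of $\V$ with
\begin{equation*}
|C|\leq \frac{\bey{k}{s}}{2^{s(k-s)}}\bigl(1+\ln 2^{s(k-s)}\bigr)=\frac{\bey{k}{s}}{2^{s(k-s)}}\bigl(1+s(k-s)\ln 2\bigr).
\end{equation*}
Using the Gaussian coefficient bound $\bey{k}{s}<2^{s(k-s)+2}$ that was already invoked in the proof of Corollary 4 collapses this to $|C|<4\bigl(1+s(k-s)\ln 2\bigr)$, matching the claim.

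No step is technically delicate: the only nonroutine ingredient is the count of complementary $(k-s)$-subspaces, and the only real design choice is to apply the Covering Lemma in its bipartite form rather than the vertex-cover form used earlier in Section 5 for $(r,s)$-sets. I would double-check this last point carefully, since routing the same hypergraph machinery through the incidence structure of $s$-subspaces versus points of $\FF^k$ is where a spurious factor could otherwise enter; but the biregularity of $\G$ makes the bounds in the bipartite setup immediate.
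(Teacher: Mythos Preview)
Your proposal is correct and follows essentially the same route as the paper: reduce via Lemma~\ref{lemm4} to covering all $(k-s)$-subspaces by complementary $s$-subspaces, set up the biregular bipartite graph on $s$-subspaces versus $(k-s)$-subspaces with degree $2^{s(k-s)}$, apply the bipartite form (V.4) of Covering Lemma~2, and finish with the estimate $\bey{k}{s}<2^{s(k-s)+2}$. Your explicit invocation of Proposition~\ref{prop5} for the code formulation and the torsor justification of the complement count are welcome clarifications, but the argument is the paper's own.
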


{\it Generic erasure corecting sets}:

The vertex set $\V$ our hypergraph $(\V,\E)$ is the set of 
nonzero vectors in $\FF^r$.
A subset $E\subset \V$ is an edge in $\E$ if and only if $E$ is a
union of $s$ linearly independent cosets (defined in the proof of Theorem 13) of an $(r-s)$--subspace.
 Thus, the degree   of each edge is $s2^{r-s}$.
Furthermore, the degree of each vertex is
$\bey{r-1}{s-1}\prod^{s-1}_{i=1}(2^s-2^i)/(s-1)!.$

It is clear that a minimal edge covering $C$ gives 
an optimal generic erasure correcting $(r,s)$--set, that is $|C|=F(r,s)$.
Applying now   (V.2) we get
$$
F(r,s)=|C|\leq \frac{2^r-1}{s2^{r-s}}\Bigl(1+\ln \frac{\prod^{s-1}_{i=1}(2^s-2^i)\bey{r-1}{s-1}}{(s-1)!}\Bigr)<$$
$$2^s(r\ln 2-\ln s).$$

{\it Stopping redundancy of a binary linear code}:

Let $\C$ be an $[n,k,d]$ code and $\C^\bot$ be its dual code. Let also $r=n-k$ and $s=d-1$.
The vertex set $\V$ of our hypergraph is the set of all nonzero vectors of $\C$.
Given a set of coordinates $K\subset [n]$ with $|K|\leq s$, let $\C^\bot_{K}$ be the set  of all
vectors in $\C^\bot$ which have weight one in $K$. Note that $|\C^\bot_K|=|K|2^{r-|K|}\geq s2^{r-s} $.
Our edge set is defined as $\E=\{\C^\bot_K: K\subset[n], 1\leq |K|\leq s\}$.
Let $C\subset \V$ be a minimum vertex cover
 of the hypergraph $(\V,\E)$. It is easy to see that if $C$ is a
parity check matrix, that is $span(C)=\C^\bot$,
then $\rho(\C)=|C|$. Note that dim span$(C)\geq s$. Therefore, adding
 at most $r-s$ independent vectors to $C$ we get a parity check matrix. Thus, we
have $\rho(\C)\leq |C|+r-s$. 
Observe now that a vector ${\bf u}\in \C^\bot$ of weight $wt({\bf u})$ covers 
 $\alpha({\bf u})=wt({\bf u})\sum^{s}_{i=1}\binom{n-wt(\bf u)}{i-1}$ edges.
Let $t=wt({\bf u})$ be the weight for which  $\alpha({\bf u})$
is maximal over all choices of ${\bf u}\in \C^\bot$.
 Thus,
$(\V,\E)$ is a  hypergraph with the minimal edge degree $d_\E=s2^{r-s}$ 
and maximal vertex degree $\D_\V=t\sum^s_{i=1}\binom{n-t}{i-1}$.
Therefore, applying (V.2) we get
$$
|C| <\frac{2^r-1}{s2^{r-s}}\Bigl(1+\ln\Bigl(t\sum^s_{i=1}\binom{n-t}{i-1}\Bigr)\Bigr)
<$$
$$\frac{2^s}s\Bigl(1+\ln\sum_{i=1}^s\binom ni\Bigl).
$$

\begin{coro} \label{coro9}
 For an $[n,k,d]$ code $\C$ with $d\geq 3$ we have
$$
 \rho(\C)<\frac{2^{d-1}}{d-1}\Bigl(1+\ln\sum_{i=1}^{d-1}\binom ni\Bigl)+n-k-d+1.
$$
\end{coro}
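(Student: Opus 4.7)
The plan is to apply Covering Lemma 2 (inequality (V.2)) to the hypergraph $(\V,\E)$ constructed in the paragraph just above the corollary, and then turn a vertex cover into a parity-check matrix of $\C$. Write $r=n-k$ and $s=d-1$; the target is $|C|<\frac{2^{s}}{s}\bigl(1+\ln\sum_{i=1}^{s}\binom{n}{i}\bigr)$ for the minimum cover $C$, after which at most $r-s$ additional rows are needed to obtain a parity-check matrix, giving $\rho(\C)\le|C|+r-s$.

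First I would verify that a vertex cover of $(\V,\E)$ is exactly a set of dual codewords whose induced matrix $H$ has a weight-one row on every coordinate set $K$ with $|K|\le s$. Recall $\V=\C^\bot\ssm\{{\bf 0}\}$ and $\C^\bot_K=\{{\bf v}\in\C^\bot:wt({\bf v}|_K)=1\}$. A cover $C\subset\V$ therefore guarantees that no $K$ with $|K|\le s=d-1$ is a stopping set of $H$, which combined with $d(\C)=d$ yields $s(H)\ge d$. If $\sp(C)\ne\C^\bot$, I append at most $r-s$ linearly independent dual vectors; this preserves the covering property and can only enlarge the stopping distance, so $\rho(\C)\le|C|+r-s$.

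Next I compute the degree parameters. Since any $j\le s$ columns of a parity-check matrix of $\C$ are linearly independent (as $d(\C)=d>j$), the projection $\C^\bot\to\FF^{j}$ onto a coordinate set $K$ with $|K|=j$ is surjective, hence $|\C^\bot_K|=j\cdot 2^{r-j}$; on the range $1\le j\le s$ with $s\ge 2$ this is minimized at $j=s$, giving $d_\E=s\cdot 2^{r-s}$. A vector ${\bf u}\in\C^\bot$ of weight $t$ lies in $\C^\bot_K$ iff $K$ meets $supp({\bf u})$ in exactly one coordinate, so its vertex degree equals $t\sum_{i=1}^{s}\binom{n-t}{i-1}$, which upper-bounds $D_\V$. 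Inserting these estimates into (V.2) and using $\frac{2^{r}-1}{s\cdot 2^{r-s}}<\frac{2^{s}}{s}$ together with the Vandermonde inequality $\binom{n}{i}\ge t\binom{n-t}{i-1}$ yields
\[
|C|<\frac{2^{s}}{s}\Bigl(1+\ln\sum_{i=1}^{s}\binom{n}{i}\Bigr).
\]
Substituting $s=d-1$ and $r-s=n-k-d+1$ into $\rho(\C)\le|C|+r-s$ then produces the stated inequality.

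The one genuine subtlety is the correspondence in the first step: I must argue carefully that a set of $\le s$ coordinates fails to be a stopping set of $H$ precisely when the cover condition holds, and that the rank extension by at most $r-s$ vectors does not disturb this property. Once that translation is pinned down, the remainder is a direct invocation of the covering lemma plus the two elementary estimates above.
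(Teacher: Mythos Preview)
Your proposal is correct and follows essentially the same route as the paper: you apply Covering Lemma~2 to the hypergraph with vertex set $\C^\bot\setminus\{\mathbf 0\}$ and edges $\C^\bot_K$, compute $d_\E=s\,2^{r-s}$ and $D_\V=t\sum_{i=1}^{s}\binom{n-t}{i-1}$, bound $t\binom{n-t}{i-1}\le\binom{n}{i}$ via Vandermonde, and then add at most $r-s$ rows to reach full rank. The only cosmetic difference is that you spell out the surjectivity argument for $|\C^\bot_K|=|K|\,2^{r-|K|}$ and the monotonicity of $j\mapsto j\,2^{r-j}$, which the paper leaves implicit.
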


 Notice that although we do not always get the best known constants,
however we achieve the same order of magnitude  for the upper bounds. 
Since this simple approach gives almost the same results as those of presented before,
it should be followed further by finding better covering results
using for example Maximal Code Lemma (\cite{A1}, p.238)
or ideas and methods described in (\cite{AB}, ch.3).

{\bf Acknowledgement} The second author would like to thank anonymous referees for their   comments.


\begin{thebibliography}{xxx}
\bibitem{A} R. Ahlswede, 
    Coloring hypergraphs: A new approach to multi-user source coding I, Journ. of Combinatorics,
Information and System Sciences, Vol. 4, No. 1, 76-115, 1979.


\bibitem{A1} R. Ahlswede, Coloring hypergraphs: A new approach to multi-user source coding-II, J. Combinatorics,
Information, and System Science, vol. 5, no. 3, 220--260, 1980

\bibitem{A2} R. Ahlswede,
    On set coverings in Cartesian product spaces, General Theory of Information Transfer and Combinatorics,
 Lecture Notes in Computer Science, Vol. 4123, Springer Verlag, 926-937, 2006.

\bibitem{AB}
 R. Ahlswede and V. Blinovsky, Lectures on Advances in Combinatorics, Universitext, Springer-Verlag, 2008. 


\bibitem{ABNNR} N. Alon, J. Bruck, J. Naor, M. Naor, and R. Roth,
Construction of asymptotically good low-rate error-correcting codes through pseudo-random graphs,
IEEE Trans. Inform. Theory, vol. 38, no. 2, 509--516, 1992.


\bibitem{BB} R.C. Bose and R.C. Burton, A characterization of flat spaces in finite geometry and the uniqueness 
of the Hamming and the MacDonald codes, J. Combin. Theory 1, 96--104, 1966.

\bibitem{CL}G.D. Cohen and A. Lempel, Linear intersecting codes, Discrete Math. 56, 35--43, 1985.

\bibitem{CZ}G.D. Cohen and G. Zemor, Intersecting codes and independent families,
IEEE Trans. Inform. Theory 40, vol. 6, 1872--1881, 1994.


\bibitem{CELS} G. Cohen, S. Encheva, S. Litsyn, and H. G. Schaathun, 
	Intersecting codes and separating codes,
Discrete Appl. Math., vol. 128, no. 1, 75--83, 2003.

\bibitem{C} C.J. Colbourn, Combinatorial aspects of covering arrays, Le Matematiche (Catania) 58, 121-167, 2004.


\bibitem{DPTRU} C. Di, D. Proietti, I.E. Telatar, T.J. Richardson, and R.L.  Urbanke, 
 Finite-length analysis of low-density parity-check codes on the binary erasure channel,
IEEE Trans. Inform. Theory vol. 48 , no. 6, 1570--1579, 2002.



\bibitem {E}T. Etzion, On the stopping redundancy of Reed-Muller codes, IEEE Trans. Inform. Theory, 
vol. 52, no. 11, 4867--4879,  2006.

\bibitem{HS} J. Han and P. H. Siegel, Improved upper bounds on stopping redundancy,
 IEEE Trans. Inform. Theory, vol. 53, no. 1, 90--104, 2007.

\bibitem{HSV} J. Han, P. H. Siegel, and A. Vardy, Improved probabilistic bounds
 on stopping redundancy,
 IEEE Trans. Inform. Theory. vol. 54, no. 4, 1749--1753, 2008.

\bibitem{HM} T. Hehn, O. Milenkovic, S. Lendner, and J. B. Huber, 
Permutation Decoding and the Stopping Redundancy Hierarchy of Cyclic and Extended 
Cyclic Codes, IEEE Trans. Inform. Theory, vol. 54, no. 12,  2008


\bibitem{HT} H.D.L. Hollmann and L.M.G.M. Tolhuizen, Generic erasure correcting sets:
 Bounds and constructions,
J. Combin. Theory A 113, 1746--1759, 2006.

\bibitem{HT2} H.D.L. Hollmann and L.M.G.M. Tolhuizen,
On parity-check collections for iterative erasure decoding that correct all correctable
erasure patterns of a given size,
IEEE Trans. Inform. Theory, vol. 53, no. 2, 823--828, 2007.


\bibitem{J}D.S. Johnson, Approximation algorithms for combinatorial problems,
J. Comput. System Sciences, vol. 9,256--298, 1974.


\bibitem{K} G.O.H. Katona, Two applications (for search theory and truth functions) of
Sperner type thorems, Periodica Math. Hung, 3, 19--26, 1973.

\bibitem{KS} G.O.H. Katona and Srivastava, Minimal 2-coverings
 of a finite affine space based on $GF(2)$,  J. Statist. Plann. Inference  8,  no. 3, 375--388, 1983.

\bibitem{KlS} D.K. Kleitman and J. Spencer, Families of $k$--independent sets,
Discrete Math. 6, 255-262, 1973.

\bibitem{KG}
J. Koml\'{o}s and A. G. Greenberg, An asymptotically fast nonadaptive
algorithm for conflict resolution in multiple-access channels, IEEE
Trans. Inform. Theory, vol. 31,  302--306, 1985.

\bibitem{L} L. Lov\'{a}sz, On the ratio of optimal and fractional covers, Discrete Math. 13, 383--390, 1975.


\bibitem{MS} F.J. MacWilliams and N.J.A. Sloane, The Theory of Error Correcting Codes,
North Holland Mathematical Library, 1977.



 \bibitem{M}D. Miklos, Linear binary codes with intersecting properties,
 Discrete Appl. Math.  9,  no. 2, 187--196, 1984.

 
\bibitem{Ren} A. R\'{e}nyi, Foundations of probability, Wiley, New York, 1971.


\bibitem{SV} M. Schwartz and A. Vardy, On the stopping distance and the stopping
redundancy of codes, IEEE Trans. Inform. Theory, vol. 52, no. 3, 
922--932,  2006.

\bibitem{St} S.K. Stein, Two combinatorial covering problems, J. Combin. Theory. A,
vol. 16, 391--397, 1974.


\bibitem{T}M.A. Tsfasmann, Algebraic-geometric codes  and asymptotic problems, Discrete Appl. Math. 33, 241--256, 1991.

\bibitem{WA} J. H.Weber and K. A. Abdel-Ghaffar, Stopping set analysis for Hamming
codes, in Proc. IEEE ISOC Information Theory Workshop on
Coding and Complexity, Rotorua, New Zealand, Aug./Sep. 2005,
24--24.


\end{thebibliography}
\end{document}